 \providecommand\@dotsep{5}
 \def\listtodoname{}
 \def\listoftodos{\@starttoc{tdo}\listtodoname}
\newcounter{nmcomment}
\tikzstyle{vertex}=[circle, draw, inner sep=0pt, minimum size=4.5pt]
\newcommand{\vertex}{\node[vertex]}
\newcommand{\WOH}{\ensuremath{\sf{W[1]}}-hard\xspace}
\newcommand{\NP}{\ensuremath{\sf{NP}}\xspace}
\newcommand{\NPH}{\ensuremath{\sf{NP}}-hard\xspace}
\newcommand{\fpt}{\ensuremath{\sf{FPT}}\xspace}
\newcommand{\longversion}[1]{#1}
\newcommand{\shortversion}[1]{}
\newcommand{\ff}{\textsc{Firefighting}}
\newcommand{\skv}{\textsc {Saving $k$-Vertices}}
\newcommand{\cl}{\textsc {Clique}}
\newcommand{\vc}{\textsc {Vertex Cover}}
\newcommand{\problembox}[4]{
\begin{framed}
{\sc #1} \\
\begin{tabular}{p{.18\textwidth} p{.7\textwidth}}
% | p{.25\textwidth} p{.7\textwidth} | 
%\hline
\hfill \bf Input: & {#2}\\
\hfill \bf Parameter: & {#3}\\
\hfill \bf Question: & {#4}\\

%\hline
\end{tabular}
\end{framed}
}
 \title{On Structural Parameterizations of Firefighting}
 \author{Bireswar Das, Murali Krishna Enduri\thanks{Supported by Tata 
 Consultancy Services (TCS) research fellowship.}, Neeldhara Misra\thanks{Supported by a DST-INSPIRE Fellowship.} and I. Vinod Reddy}
 \institute{IIT Gandhinagar, India \\
 \email{\{bireswar,endurimuralikrishna,neeldhara.m,reddy\_vinod\}@iitgn.ac.in}
 }
\begin{document}
%\pagenumbering{roman}
%\pagenumbering{arabic}
%\pagenumbering{arabic}
\pagestyle{plain}

% \date{}
 \maketitle
 \begin{abstract}
The \ff{} problem is defined as follows. At time $t=0$, a fire breaks out at a vertex of a graph. At each time step $t \geq 0$, a firefighter permanently defends (protects) an unburned vertex, and the fire then spread to all undefended neighbors from the vertices on fire. This process stops when the fire cannot spread anymore. The goal is to find a sequence of vertices for the firefighter that maximizes the number of saved (non burned) vertices.

The \ff{} problem turns out to be \NPH{} even when restricted to bipartite graphs or trees of maximum degree three. We study the parameterized complexity of the \ff{} problem for various structural parameterizations. All our parameters measure the distance to a graph class (in terms of vertex deletion) on which the \ff{} problem admits a polynomial time algorithm. Specifically, for a graph class $\mathcal{F}$ and a graph $G$, a vertex subset $S$ is called a \textit{modulator} to $\mathcal{F}$ if $G \setminus S$ belongs to $\mathcal{F}$. The parameters we consider are the sizes of modulators to graph classes such as threshold graphs, bounded diameter graphs, disjoint unions of stars, and split graphs. 

To begin with, we show that the problem is \WOH{} when parameterized by the size of a modulator to diameter at most two graphs and split graphs. In contrast to the above intractability results, we show that \ff{} is fixed parameter tractable (\fpt{}) when parameterized by the size of a modulator to threshold graphs and disjoint unions of stars, which are subclasses of diameter at most two graphs. We further investigate the kernelization complexity of these problems to find that \ff{} admits a polynomial kernel when parameterized by the size of a modulator to a clique, while it is unlikely to admit a polynomial kernel when parameterized by the size of a modulator to a disjoint union of stars. 
%
%To begin with, we show that the problem is fixed parameter tractable (\fpt{}) when parameterized by the size of a modulator to threshold graphs and disjoint unions of stars. We further investigate the kernelization complexity of these problems to find that Firefighting admits a polynomial kernel when parameterized by the size of a modulator to a clique, while it is unlikely to admit a polynomial kernel when parameterized by the size of a modulator to a disjoint union of stars. Finally, in contrast to all the tractability results, we show that \ff{} is \WOH{} when parameterized by the distance to split graphs. 
 \end{abstract}

%%%%%%%%%%%%%%%%%%%%%%%%%%%%%%%%%%%%%%%%%%%%%%%%%%%%%%%%%%%%%%%%%%%%%%%%%%%%%%%%%%%%%%%%%%%%%%%%%%%%%%%%%%%%%%%%%%%%%%%%%%%%%%%%%%%%%%%%%%%%%%%%%%%%%%%%%%%%%%%%%%%%%%%%%%%%%%%%%%%%%%%%%%%%%%%%%%%%%%%%%%%%%%%%%%%%%%%%%%%%%%%%%%%%%%%%%%%%%%%%%%%%%%%%%%%%%%%%%%%%%%%%%%%%%%%%%%%%%%%%%%%%%%%%%%%%%%%%%%%%%%
\section{Introduction}
The \ff{} problem was introduced by Hartnell \cite{hartnell1995firefighter} to model the spread of diseases and computer viruses. 
It is a turn-based game between two players (the ``fire'' and the ``firefighter''), which is played on a graph $G$ as follows. Initially, at time t=0, a fire starts at a vertex $s$, at each following time step the
following happens. A firefighter defends one vertex which is not on fire, and the fire then spreads from each burning vertex to all its undefended
neighbors. Once a vertex is defended it remains so for all time intervals. The process stops when the fire can no longer 
spread. The natural algorithmic question associated with this game is to find a strategy that optimizes some desirable criteria, 
for instance, maximizing the number of saved vertices~\cite{cai2008firefighting}, minimizing the number of rounds, 
the number of firefighters per round~\cite{chalermsook2010resource}, or the number of burned vertices~\cite{finbow2000minimizing,cai2008firefighting}, and so on. 
These questions are well-studied in the literature, and while most variants are \NPH{}, approximation and parameterized 
algorithms have been proposed for various scenarios.  In this work, we will focus on the goal of finding a sequence of defending vertices that maximizes the number of saved (not burned) vertices and we refer to this as the \ff{} problem. We also use \skv{} to refer to the decision version of this problem, where we are given a demand~$k$ and the goal is to save at least $k$ vertices.

We  study the parameterized complexity of \ff{} with respect to various structural parameters. In particular, our focus is on distance-to-triviality parameterizations, wherein we identify classes of graphs on which the \ff{} problem is solvable in polynomial time, and understand the parameterized complexity of the problem parameterized by the distance of a graph to these graph classes. In this paper, our notion of distance to a graph class in the vertex deletion distance. More precisely, for a class $\mathcal{F}$ of graphs, we say that $X$ is an $\mathcal{F}$-modulator of a graph $G$ if there is a  subset $X\subseteq V(G)$ such that  $G \setminus X \in \mathcal{F}$.  If the size of a smallest modulator to $\mathcal{F}$ is $k$, we also say that the distance of $G$ to the class $\mathcal{F}$ is $k$. Throughout this paper, we will assume that a modulator is given to us as a part of the input. This assumption is without loss of generality since such modulators can be computed in \fpt{} time. We are now ready to describe our results. 
%For example structural parameter such as vertex cover, tree-width, clique-width are well studied 
%In this work instead of parameterizing by solution size, we use some measure of the input graph which measures the complexity of graph. 
%We call such parameterization as structural parameterizations. 

%the structural properties of input graph. 
%In this work we study firefighting using the structural properties of input graph.
%Firefighting is NP-hard even for constant values of clique-width \cite{finbow2007firefighter}.
% However, vertex cover is restrictive in the sense that many
%graphs which arise in practice can have unbounded vertex cover. 

\paragraph{Our Contributions.} 
The \ff{} problem is \fpt{} when parameterized by the vertex cover and distance to a clique parameterizations. On the other hand, it is para-NP-hard when parameterized by feedback vertex set, tree-width and clique-width \cite{finbow2007firefighter}. However, the parameter vertex cover is very restrictive and significantly large for dense graphs. This motivates us to consider parameters that are intermediate between 
vertex cover and clique-width. In this spirit, Ganian \cite{ganian2015improving} studied the parameterized complexity of \ff{} problem for parameter twin-cover which is a generalization of vertex cover and showed that \ff{} is $\fpt$ with respect to twin-cover. Recently Chleb{\'\i}kov{\'a} et al.~\cite{chlebikova2017firefighter} showed that the problem is $\fpt$ parameterized by distance to cluster graphs which is a generalization of twin-cover.

We study the parameterized complexity of the \ff{} problem with respect to the distance from following graph classes: threshold graphs, disjoint union of stars, disjoint union of graphs of diameter at most two, and split graphs.  %All these parameters are generalizations of vertex-cover and the distance to clique and they are intermediate between vertex-cover and clique-width.
%On the other hand the parameter distance to threshold graphs is generalization of vertex-cover and distance to clique lies between vertex-cover and distance to split graphs.
Studying the parameterized complexity of \ff{} with respect to these parameters improves the understanding of the boundary between tractable and intractable parameterizations. For instance, the parameterization by distance to cluster graphs (as studied by~\cite{chlebikova2017firefighter}) directly generalizes both vertex cover and distance to clique. Observe that cluster graphs are precisely the graphs whose connected components have diameter one, and a natural generalization to consider is the class of graphs whose connected components have diameter two. Here, we show that there is a transition in complexity: the problem becomes \WOH{}. As a natural intermediate problem, we consider the subclass of graphs where every connected component is a star. Here, using ideas similar to the ones that lead to the \fpt{} algorithm for the distance to cluster parameter, we obtain a \fpt{} algorithm. The case analysis here is more delicate because we have to distinguish between the central vertex and the leaves. 

On the other hand, the distance to threshold graphs parameter directly generalizes the distance to clique parameter, while the distance to stars parameter is a generalization of vertex cover. For both of these parameters, we establish that \ff{} is \fpt{}. These being smaller parameters, our results improve several known algorithms. Note that the next ``natural'' parameter to consider after distance to stars hierarchy is the feedback vertex number, or the distance to forests; however here the problem is already \NPH{} on trees, leading to para-\NPH{}ness. Similarly, a natural next step from distance to threshold graphs is the distance to split graphs, but here also we demonstrate \WOH{}ness. Finally, a promising generalization from the distance to cluster graphs is the distance to cographs, and here we leave the parameterized complexity of the problem open. 

\begin{figure}[t]
\centering
\includegraphics[scale=0.6,trim={2.5cm 13cm 2cm 3.5cm},clip]{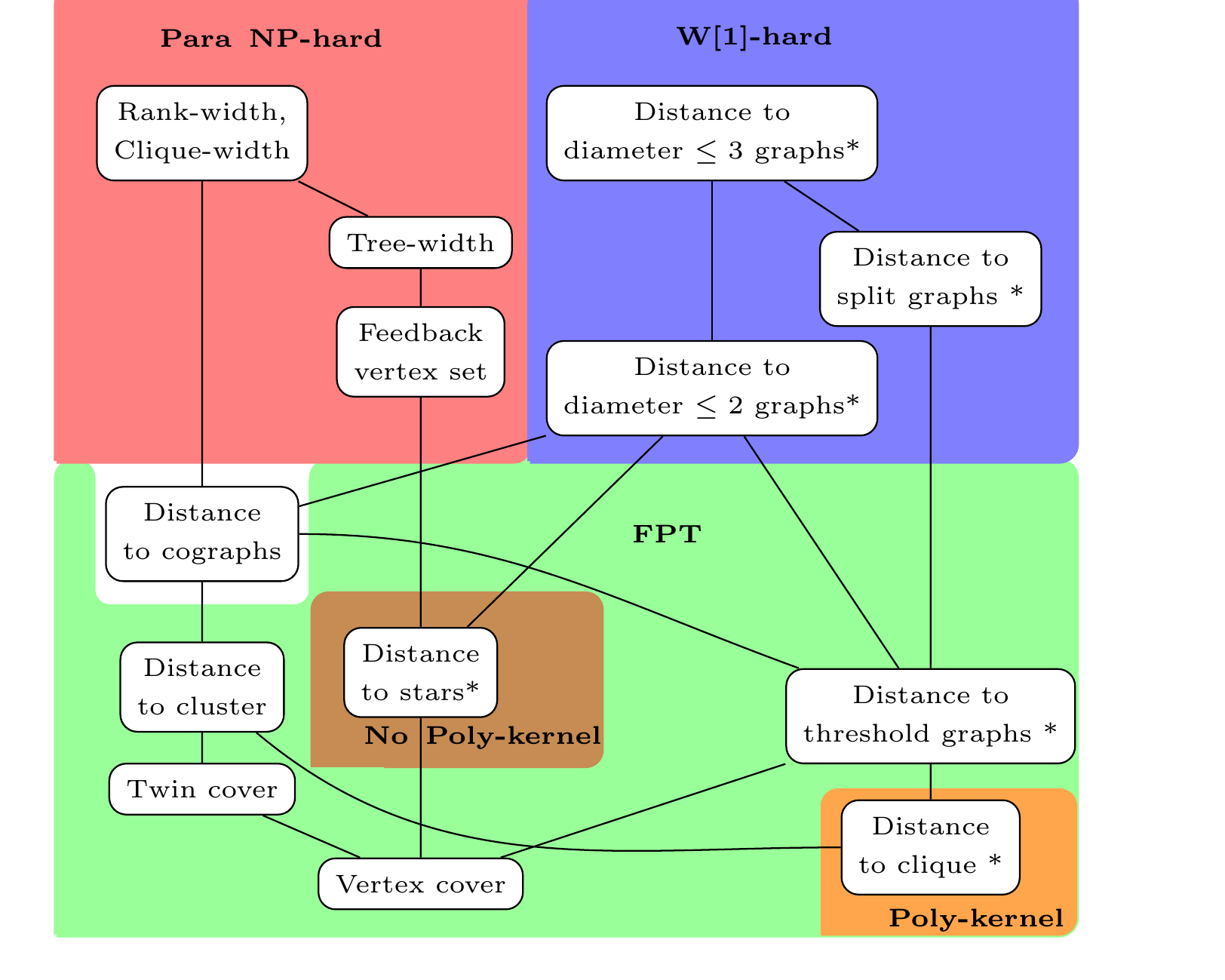}
\caption{ \hspace{0.1cm} A schematic showing the parameterized complexity of \ff{} with respect to various structural graph parameters.
There is a line between two parameters
if the parameter below is larger than the parameter
above. Results shown in this paper are marked by an asterisk (*). }
\label{fig1}
\end{figure}

We also consider the kernelization complexity of the problem and make the following advances: for the distance to clique parameterization, we demonstrate a quadratic kernel, while for the distance to stars parameterization, we show that a polynomial kernel is unlikely under standard complexity-theoretic assumptions. The kernelization complexity of the problem relative to vertex cover, however, remains an interesting open problem. We summarize our results below.

\begin{itemize}
\item We show that the problem is  fixed parameter tractable (\fpt{}) when parameterized by the size of a modulator to threshold graphs, cluster graphs and disjoint unions of stars. 
\item 	We further investigate the kernelization complexity of these problems to find that Firefighting admits a polynomial kernel when parameterized by the size of a modulator to a clique, while it is unlikely to admit a polynomial kernel when parameterized by the size of a modulator to a disjoint union of stars.
\item  Finally, in contrast to the tractability results, we show that \ff{} is \WOH{} when parameterized by the distance to split graphs. In fact, the problem remains \WOH{} with respect to the combined parameter involving the size of the modulator and the number of vertices to be saved. 
\end{itemize}

\paragraph{Methodology.} By and large, we use a standard approach for the \fpt{} algorithms: we guess the behavior of the solution on the modulator and attempt to find a solution consistent with the guessed behavior. The second part relies on exploiting the structural properties of $G \setminus X$, which is the part of the graph outside the modulator. Usually one is able to group the vertices of $G \setminus X$ based on the structure of their neighborhoods in the modulator, and argue that all vertices of the same ``type'' have a similar behavior, which leads to a controlled search space. In the case of threshold graphs, we are able to prove that simple greedy techniques work within a particular type. On the other hand, for disjoint unions of stars, we have to account for several scenarios, and the classification of $G \setminus X$ is more intricate, and accordingly, we have to account for more cases in the analysis. The hardness results follow from reductions using standard techniques, while the kernelization algorithm uses the fact that when $G \setminus X$ is a large clique, several vertices behave in a similar fashion, and this observation allows us to replace the large clique with a much smaller one --- a careful argument is required, however, to demonstrate that the instance we constructed in this fashion is indeed equivalent to the original. 

\paragraph{Related Work.} The \ff{} problem is known to be $\NP$-hard even for special classes of graphs, including bipartite graphs \cite{macgillivray2003firefighter}, trees of maximum degree three \cite{finbow2007firefighter}
and cubic graphs~\cite{king2010firefighter}. The firefighter problem can be solved in linear time on split graphs and co-graphs \cite{fomin2016firefighter}. From the parameterized complexity point of view, the firefighting problem is $W[1]$-hard when parameterized by the number of saved vertices for bipartite graphs \cite{bazgan2014parameterized}. Cai et al. \cite{cai2008firefighting} propose several \fpt{} algorithms and polynomial kernels, and they consider the following parameters: the number of saved vertices, the number of saved leaves, and the number of protected vertices. Leung \cite{leung2011fixed} use the random separation method to give \fpt{} algorithms on general graphs parameterized by the number of burnt vertices and on degree bounded graphs and unicyclic graphs parameterized by the number of protected vertices. We refer the reader to the survey~\cite{Finbow:2009vma}, 
as well as the references within, for more details. 

\section{Preliminaries}
In this section, we introduce the notation and the terminology that we will need to describe our algorithms. Most of our notation is standard. We use $[k]$ to denote the set $\{1,2,\ldots,k\}$. All graphs we consider in this paper are undirected, connected, finite and simple.
For a  graph $G=(V,E)$, let $V(G)$ and $E(G)$ denote the vertex set and edge set of $G$ respectively.
An edge in $E$ between vertices $x$ and $y$ is denoted as $xy$ for simplicity. 
For a  subset $X \subseteq V(G)$, the graph $G[X]$ denotes the subgraph of $G$ induced by vertices of $X$. Also, we abuse notation and use $G \setminus X$ to refer to the graph obtained from $G$ after removing the vertex set $X$. 
For a vertex $v\in V(G)$,
$N(v)$ denotes the set of vertices adjacent to $v$ and $N[v] = N(v) \cup \{v\}$ is the closed neighborhood of $v$. \shortversion{We use standard notions from parameterized complexity, following~\cite{cygan2011parameterized}. We refer the reader to the Appendix for the most relevant definitions required in this paper, and to~\cite{cygan2011parameterized} for a more detailed introduction.}

\paragraph{Graph Classes.} We now define the graph classes that we will encounter frequently. 

\begin{itemize}
\item 	A graph is a \emph {split graph} if its vertices can be partitioned into a clique and an independent set.  Split graphs are $P_5$-free \cite{foldes1976split}.
\item The class of $P_4$-free graphs are called \emph{co-graphs}. 
\item A graph is a \emph{threshold graph} if it can be constructed recursively by adding an isolated vertex or a universal vertex.
\item A cluster graph is a disjoint union of complete graphs. Cluster graphs are also $P_3$-free graphs.

\end{itemize}

 It is easy to see that a graph that is both split and co-graph is a threshold graph. We denote  threshold graph (or a split graph) with $G = (C,I)$ where $C$ and $I$ denote a partition
of $G$ into a clique and an independent set. For any two vertices $x, y$ in a threshold graph $G$ we have either $N(x)\subseteq N[y]$ or $N(y)\subseteq N[x]$. For a class of graphs $\mathcal{F}$, the distance to $\mathcal{F}$ of a graph
$G$ is the minimum number of vertices to be deleted from $G$ to get a graph in $\mathcal{F}$. 

%For example distance to clique of a graph $G$ is the minimum number of vertices to be deleted from $G$ to get a clique.   

\longversion{\paragraph{Parameterized Complexity.}
A parameterized problem is a pair $Q\subseteq \Sigma^*\times \mathbb{N}$, where $\Sigma$ is fixed finite alphabet. For an instance $(x, k) \in \Sigma^*\times \mathbb{N}$, $k$ is called the
parameter. We say that 
a parameterized problem $Q$ is {\it fixed parameter tractable} ($\fpt$) if there exists an algorithm  and a computable function $f : \mathbb{N} \rightarrow \mathbb{N}$ such that given $(x,k) \in \Sigma^*\times \mathbb{N}$ the algorithm correctly decides  whether $(x,k) \in Q$ in
$f(k)· |(x,k)|^{O(1)}$ time. The function $f$ is usually superpolynomial and only depends on parameter $k$. The class $XP$ contains the problems  which are solvable in time $|(x,k)|^{f(k)}$, the exponent of the running time depends on the parameter $k$. A problem is {\it para-NP-hard} if it is $\NP$-hard for some fixed values of the parameter. The complexity class of parameterized intractability is called $W[1]$ (see \cite{downey1999parameterized} for the definition). A {\it kernelization } algorithm is a polynomial time algorithm that takes an instance $(x, k)$ of a parameterized
problem $Q$ as input and outputs an equivalent instance $(x', k')$ of $P$ such that $|x'|\leq h(k)$ for some computable function $h$ and $k'\leq k$. If
$h$ is polynomial then we say that $(x', k')$ is a polynomial kernel. Let $P$ and $Q$ be two parameterized problems. A parameterized reduction from $P$ to $Q$ is a mapping 
$g: \Sigma^* \rightarrow  \Sigma^*$ such that (i) for all $x \in  \Sigma^*$ we have $x$ is a yes instance of $P$ $\Leftrightarrow$ $g(x)$ 
is a yes instance of $Q$.
(ii) $g$ can be computed in $f(k) |x| ^{O(1)}$ time, where $f$ is computable function and $k$ is the parameter of $x$.
(iii) $k' \leq h(k)$ for some computable function $h$, where $k$ and $k'$ are parameters of $x$ and $g(x)$ respectively.  
For more details on parameterized complexity the reader is referred to  \cite{downey1999parameterized,cygan2015parameterized}.}

\longversion{We now briefly justify our assumption about the modulators being given as a part of the input to our problems. Consider the following result. 

\begin{lemma}\cite{cai1996fixed} \label{lem-recognize}
Let $\mathcal{F}$ be a graph class characterized by the finite forbidden induced subgraphs $H_1,\cdots,H_l$.
Given a graph $G$ and integer $k$, there is an $\fpt$ algorithm that finds a subset $X \subseteq V(G)$ of size at most $k$ such that $G \setminus X \in \mathcal{F}$ in $O(d^k n^d)$, where $d$ is the size of largest forbidden subgraph.
\end{lemma}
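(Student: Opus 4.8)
The plan is to use a standard bounded search tree (branching) argument, exploiting the fact that membership in $\mathcal{F}$ is decided by the absence of finitely many forbidden induced subgraphs, each on at most $d$ vertices. The central observation is a local certificate of non-membership: if $G \notin \mathcal{F}$, then $G$ contains an induced copy of some $H_i$, and such a copy is witnessed by a vertex set of size at most $d$. Crucially, any modulator $X$ with $G \setminus X \in \mathcal{F}$ must intersect every such witness, since deleting no vertex of the witness leaves the forbidden subgraph intact. This is exactly the branching rule we will exploit.

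First I would describe a recursive procedure that maintains a graph $G$ together with a budget $k$. At each step it searches for a \emph{witness}, i.e.\ a subset $W \subseteq V(G)$ with $|W| \le d$ such that $G[W]$ is isomorphic to one of $H_1, \ldots, H_l$. This search is carried out by brute force: enumerate all $O(n^d)$ vertex subsets of size at most $d$ and test each against the (constantly many, constant-sized) forbidden graphs. If no witness is found, then $G \in \mathcal{F}$ and the procedure returns the vertices deleted so far along the current branch; if a witness is found but $k = 0$, the branch reports failure.

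Next, when a witness $W$ is found and $k \ge 1$, I would branch: for each vertex $v \in W$ (at most $d$ choices) recurse on $(G \setminus \{v\},\, k-1)$, accumulating $v$ into the deletion set. Correctness follows in both directions. Any set returned by the procedure is, by construction, of size at most $k$ and yields a graph with no forbidden induced subgraph, hence a graph in $\mathcal{F}$. Conversely, if a modulator of size at most $k$ exists, it must delete at least one vertex of $W$, so at least one of the $d$ branches is consistent with a correct solution, and by induction on $k$ that branch succeeds.

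Finally I would bound the running time. The recursion depth is at most $k$, since the budget strictly decreases along every branch, and the branching factor is at most $d$; hence the search tree has $O(d^k)$ nodes. The work at each node is dominated by the witness search, which costs $O(n^d)$, yielding the claimed $O(d^k n^d)$ bound. The only mild subtlety — a point needing care rather than genuine difficulty — is to observe that $d$ and $l$ are absolute constants determined by $\mathcal{F}$, so each isomorphism test on a $d$-vertex induced subgraph costs $O(1)$ and does not inflate the exponent of $n$; once this is noted, the analysis is routine.
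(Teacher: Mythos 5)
Your proof is correct: the paper itself gives no proof of this lemma (it is quoted from the cited reference \cite{cai1996fixed}), and your bounded search tree argument --- find an induced copy of some $H_i$ by brute force in $O(n^d)$ time, branch on its at most $d$ vertices, and recurse with budget $k-1$ --- is exactly the standard argument behind Cai's result, with the correct $O(d^k n^d)$ accounting. Nothing further is needed.
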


The class of cluster graphs, threshold graphs and split graphs are $P_3$, $P_4$ are $P_5$ free respectively. By using above lemma, Given a graph $G$ and integer $k$, the problem of deciding whether there exists a set $X$ of vertices of size at most $k$ whose deletion results in a cluster graph, threshold graph and split graph is fixed parameter tractable. }

\paragraph{The Firefighting Problem.} Finally, we formally define the problems that we consider in this paper, where $\mathcal{F}$ is a class of graphs or a graph property. 

\problembox{\ff{}[$\mathcal{F}$]}{A graph $G$, a vertex $s$, a modulator $X \subseteq V(G)$ such that $G \setminus X \in \mathcal{F}$.}{The size $k := |X|$ of the modulator to $\mathcal{F}$.}{Find a strategy that maximizes the number of saved vertices when a fire starts at $s$?}

\problembox{\skv{}[$\mathcal{F}$]}{A graph $G$, a vertex $s$, a modulator $X \subseteq V(G)$ such that $G \setminus X \in \mathcal{F}$, and an integer $k$.}{The size $\ell := |X|$ of the modulator to $\mathcal{F}$.}{Does there exists a strategy that saves at least $k$ vertices when a fire starts at $s$?}

Whenever $\mathcal{F}$ is clear from the context, we drop the explicit mention of it in the name of the problem. We also abuse notation and use $k$ differently in the two definitions, to retain consistency with standard notation when we present reductions in the context of \skv{}. 
\section{The Parameterized Complexity of Firefighting}\label{sec-basics}
%  Let $G$ be a graph and $s$ be a vertex of $G$. Suppose a fires starts at $s$ at time 0, at each following time the
%  firefighter defends some vertex which is not on fire then the fire spreads from each burning vertex to all its undefended
%  neighbors. Once a vertex is defended it remains so for all time intervals. The process stops when the fire can no longer 
%  spread. The {\it Firefighting} problem is to find a strategy such that number of saved vertices is maximum when the process stops. 

Let $(G,s)$ be an instance of \ff{} problem.
The vertices that have not been burned by the fire at the
end of the process are called \emph{saved} (including defended vertices).
%at the end of the process each vertex in the graph $G$ can be either burned or saved. 
A vertex is \emph{burned} if it is on fire. A strategy for \ff{} instance $(G,s)$ is a sequence $S$ of vertices $\{v_1, \cdots, v_l\}$ 
where $v_i$ represents the position of the firefighter at time step $i$.  
We say that sequence $S$ is a \emph{valid strategy} for $(G,s)$ 
%if during the Firefighting game on (G, s), 
if vertex $v_i$ is not burning at the start of time step $i$ and the process stops at time
step $l$. A strategy $S$ is called \emph{minimal} if no subset of $S$ yields a strategy that saves same number of vertices as $S$.  
A strategy is \emph{optimal} if it is minimal and saves maximum number of vertices. 

The following results from the literature will be useful for our algorithms. 
%\shortversion{For the sake of a self-contained presentation, we provide the proofs in the full version of the paper, which is appended as an Appendix. The proofs of statements marked with a $(\star)$ are also deferred to the full version, due to lack of space.}\longversion{For the sake of a self-contained presentation, we provide the proofs here.}
\begin{lemma}\label{lem-stcheck} \cite{fomin2016firefighter}
Let $G$ be a graph and $s$ be a vertex of $G$. Given an ordered set $S$ of vertices
of $G$, we can verify whether $S$ is a valid strategy for the \ff{} problem on $(G,s)$ 
and count the number of vertices saved by S in $O(n + m)$ time, where $n$ and $m$ denote the number of vertices and edges in $G$ respectively. 
\end{lemma}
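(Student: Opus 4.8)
The plan is to simulate the firefighting game round by round, exploiting the fact that, once a vertex has been on fire for a full round, all of its neighbours are already either burning or permanently defended, so the fire can only advance from the vertices that caught fire most recently. Concretely, I would keep a three-valued status label for every vertex---\emph{burning}, \emph{defended}, or \emph{untouched}---initialised so that $s$ is burning and every other vertex is untouched, together with a \emph{frontier} list holding the vertices that burned during the previous round (initially $\{s\}$). This mirrors a breadth-first exploration from $s$ in which certain vertices are blocked off by defenders.

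I would then process the rounds $i = 1, \ldots, l$ in order, faithfully reproducing the dynamics: in round $i$ I first check that $v_i$ is not currently burning and reject $S$ as invalid if it is; otherwise I mark $v_i$ as defended and then spread the fire by one layer, scanning the adjacency list of each frontier vertex and turning every untouched, undefended neighbour into a burning vertex that joins the next frontier. Because the defence precedes the spread within a round, a vertex defended in round $i$ correctly blocks the fire in that same round. After the loop I would verify the termination requirement of a valid strategy---that the process halts exactly at step $l$---by confirming that the frontier is empty at the end (so the fire cannot spread further) and that it never emptied prematurely at a round $i < l$ (so $S$ carries no superfluous defenders). Once $S$ is declared valid, counting is trivial: a vertex is saved precisely when it is not labelled burning at termination, so the answer is $n$ minus the number of burning vertices, obtained in one final pass.

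The running time claim rests on a standard amortisation argument, which is the only delicate point. Each vertex changes to burning at most once and never reverts, so it enters the frontier at most once; charging the cost of scanning a frontier vertex's adjacency list to its incident edges shows that each edge is inspected at most twice over the whole run, bounding the spreading cost by $O(m)$. The per-round work of testing and marking $v_i$ is $O(1)$, contributing $O(l) = O(n)$ in total, and initialisation costs $O(n)$, giving $O(n+m)$ overall. The hard part is precisely ensuring that the spread step scans only the current frontier rather than re-examining the entire burnt region each round; without this restriction the simulation could cost $\Theta(nm)$, and it is the observation that settled (non-frontier) burning vertices have no untouched neighbours left that licenses the linear-time scan.
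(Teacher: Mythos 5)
Your proof is correct in its essentials but takes a genuinely different route from the paper's. You simulate the timed process directly: a three-state labelling plus a frontier, interleaving each defence with one layer of spreading, and you obtain $O(n+m)$ via the amortised argument that each vertex enters the frontier at most once and each adjacency list is scanned at most once. The paper instead verifies validity \emph{statically}: it notes that $S$ is valid precisely when, for each $i$, the BFS distance from $s$ to $v_i$ in the graph $G[(V(G)\setminus S)\cup\{v_i\}]$ is at least $i$, and it counts the burned vertices as those reachable from $s$ in $G\setminus S$ with one further BFS. Your dynamic simulation buys self-containedness: you never need to justify that burn times coincide with distances in a graph from which all defenders are deleted up front, a fact the paper's characterisation uses implicitly. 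The paper's route buys brevity, though as written it leaves open how the $|S|$ distance computations fit into a single $O(n+m)$ pass (one BFS on $G\setminus S$, followed by a scan of each $v_i$'s neighbourhood taking $1+\min$ of the neighbours' distances, does it).

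One caveat on your side: you test ``the process stops at step $l$'' by requiring the frontier to be empty after round $l$ and non-empty before. That is not equivalent to ``the fire can no longer spread'': the final spread may burn new vertices all of whose neighbours are already burning or defended, in which case your algorithm rejects a strategy that should be accepted (and, symmetrically, it can accept a strategy whose fire already became unable to spread one round earlier). The fix is cheap --- after round $l$, scan the adjacency lists of the final frontier once more and check that no untouched, undefended neighbour remains, which stays within $O(n+m)$. Note that the paper's proof sidesteps this issue only by not checking the stopping condition at all, so this is a refinement of your argument rather than a fatal flaw.
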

\begin{proof}
Let $S =\{v_1,\cdots,v_k\}$ is strategy in this order. 
 To verify whether the sequence $S$ is a valid strategy, we do
 BFS on the graph $G$ starting from $s$. Find the distance $d(s,v)$ from the source $s$ to each vertex $v$ of $S$ on graph 
 $G[(V(G) \setminus S) \cup \{v\}]$. For $i = 1$ to $k$,
If $S$ is a valid strategy then distance from source $s$ to the vertex $v_i$,  $d(s, v_i ) \geq i$, otherwise 
vertex $v_i$ will be burned before time step $i$. Observe that the number of vertices burned by $S$ in $G$ equal to
the number of vertices reachable from $s$ in $G \setminus S$, which can be found by applying BFS on the graph $G \setminus S$ starting from $s$.\qed
\end{proof}

\begin{corollary}\label{coro-burned}
 Let $S$ be an optimal strategy for the \ff{} problem on $(G, s)$ then the number of vertices burned by $S$ in $G$ is equal to
the number of vertices reachable from $s$ in $G \setminus S$.
\end{corollary}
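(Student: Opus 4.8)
The plan is to derive this as an immediate consequence of the observation already recorded in the proof of Lemma~\ref{lem-stcheck}, the only additional ingredient being that an optimal strategy is in particular a \emph{valid} strategy. Indeed, by definition an optimal strategy is minimal and saves the maximum number of vertices, and the notion of ``saving'' is only meaningful for a sequence of defenses that can actually be executed; hence every optimal strategy $S$ satisfies the validity condition that each defended vertex $v_i$ is not yet on fire at the start of time step $i$. Thus it suffices to establish the equality for an arbitrary valid strategy and then instantiate it at the optimal $S$.

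First I would show that every vertex reachable from $s$ in $G \setminus S$ is burned. If $v$ is reachable from $s$ in $G \setminus S$, fix a path $P$ from $s$ to $v$ that avoids $S$. Since $S$ is precisely the set of defended vertices, no vertex of $P$ is ever defended, so a straightforward induction along $P$ shows that the fire spreads to each successive vertex of $P$; in particular $v$ burns.

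Conversely, I would show that every burned vertex is reachable from $s$ in $G \setminus S$. Because $S$ is valid, no defended vertex catches fire, so every burned vertex lies outside $S$. Moreover, any burned vertex $v \neq s$ acquired the fire from some burning neighbour, and tracing this back to the source yields a path of burned vertices from $s$ to $v$, all of which avoid $S$; the source $s$ itself is trivially reachable from itself in $G \setminus S$. Hence $v$ is reachable from $s$ in $G \setminus S$. Combining the two containments, the set of burned vertices coincides exactly with the set of vertices reachable from $s$ in $G \setminus S$, and taking cardinalities yields the claim. I do not expect any real obstacle here: the entire content is the timing-free characterization of the burned set for valid strategies, which is exactly what Lemma~\ref{lem-stcheck} observes, and the corollary merely specializes it to the optimal case.
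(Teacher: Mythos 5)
Your proposal is correct and follows essentially the same route as the paper: the paper has no separate proof for Corollary~\ref{coro-burned}, deriving it from the observation at the end of the proof of Lemma~\ref{lem-stcheck} that for any valid strategy the burned vertices are exactly those reachable from $s$ in $G \setminus S$. Your double-containment argument, together with the remark that an optimal strategy is in particular valid, is just the explicit verification of that same observation.
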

% 
% \begin{proposition} \cite{macgillivray2003firefighter}
%  Let $(G,s)$ be an instance of the \ff{} problem. In any optimal strategy each defended vertex is adjacent to a burning vertex. 
% \end{proposition}

\begin{lemma}\label{lem-optlength1}\cite{fomin2016firefighter}
 Let $(G,s)$ be an instance of the \ff{} problem, and let $l$ be the length of a longest
induced path in $G$ starting from $s$. Then any optimal strategy can defend at most $l$ vertices.
\end{lemma}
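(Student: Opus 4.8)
The plan is to take an arbitrary optimal strategy $S = \{v_1, \ldots, v_t\}$, where $v_i$ is defended at time step $i$, and to exhibit an induced path of $G$ starting at $s$ that has length at least $t$; since the longest such path has length $l$, this immediately forces $t \le l$. The natural candidate for the witnessing path is a shortest path from $s$ to the \emph{last} defended vertex $v_t$, computed in the graph $G_t := G \setminus \{v_1, \ldots, v_{t-1}\}$ obtained by deleting every defended vertex except $v_t$.

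The first and most delicate step is to argue that $v_t$ is reachable from $s$ in $G_t$. This is precisely where optimality, and specifically minimality, enters. Suppose $v_t$ were unreachable from $s$ in $G_t$. Deleting an unreachable vertex does not change which vertices are reachable from $s$, so $s$ has the same reachable set in $G_t$ and in $G \setminus S$. By the characterization of the burned region as the set of vertices reachable from $s$ once the defended vertices are removed (established in the proof of Lemma~\ref{lem-stcheck} and recorded in Corollary~\ref{coro-burned}), the proper subset $\{v_1, \ldots, v_{t-1}\}$ of $S$ would then save exactly as many vertices as $S$, contradicting minimality. Here one should also note that $S$ and the shorter sequence $\{v_1,\ldots,v_{t-1}\}$ induce identical dynamics over the first $t-1$ time steps, so dropping $v_t$ endangers none of the earlier defended vertices.

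With reachability in hand, I would invoke validity. Because $S$ is a valid strategy, Lemma~\ref{lem-stcheck} yields $d_{G_t}(s, v_t) \ge t$, since otherwise $v_t$ would already be burning at the start of time step $t$. Together with reachability this gives $t \le d_{G_t}(s, v_t) < \infty$. Now let $P$ be a shortest $s$--$v_t$ path in $G_t$; it has length $d_{G_t}(s, v_t) \ge t$, and being a shortest path it has no chords and is therefore an induced path of $G_t$. Since $G_t$ is an induced subgraph of $G$ and $V(P) \subseteq V(G_t)$, the set $V(P)$ induces the same path in $G$, so $P$ is an induced path of $G$ starting at $s$ with at least $t$ edges. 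Hence $l \ge t$, which is the claim.

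I expect the main obstacle to be the reachability step rather than the geometry of the path. One must be careful that replacing $S$ by the subset $\{v_1,\ldots,v_{t-1}\}$ neither lets the fire reach any earlier $v_i$ prematurely nor alters the final burned set; both points are handled by combining the agreement of the two strategies on their common first $t-1$ steps with the reachability description of the burned region. Once that is settled, transferring induced-ness from $G_t$ back to $G$ is routine, as deleting vertices does not create new edges among the survivors.
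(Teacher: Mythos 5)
Your proof is correct and takes essentially the same approach as the paper's: both exhibit an induced path from $s$ to the last defended vertex $v_t$ and bound its length below by $t$ using validity (the paper takes a shortest path of burned vertices ending at $v_t$, you take a shortest $s$--$v_t$ path in $G \setminus \{v_1,\dots,v_{t-1}\}$, which is the same witness). The only difference is that you explicitly justify, via minimality, why such a path exists at all --- a step the paper's proof asserts without argument --- so your write-up is, if anything, the more complete one.
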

\begin{proof}
 Let $S=\{v_1,\cdots,v_t\}$ be an optimal strategy defended in this order. 
 Since $S$ is a valid strategy there is an induced path $P$ from $s$ to $v_t$ such that all the vertices on $P$ are burned
 except $v_t$. Let $P$ be a shortest such path, then $P$ contains at least $t+1$ vertices: otherwise $v_t$ will be burned before
 time $t$. Since the length of longest induced path in $G$ starting from $s$ is $l$, we have $t+1\leq l+1$ which implies $t \leq l$. \qed
\end{proof}

\begin{lemma} \cite{fomin2016firefighter}  \label{lem-xp}
The \ff{} problem can be solved in $O(n^l)$ time on graphs with length of longest induced path is at most $l-1$.
\end{lemma}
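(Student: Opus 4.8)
The plan is to solve the problem by straightforward brute-force enumeration over all candidate strategies of bounded length, and the key observation is that the hypothesis severely restricts how long an optimal strategy can be. Since the longest induced path in $G$ starting from $s$ has length at most $l-1$, Lemma~\ref{lem-optlength1} tells us that any optimal strategy defends at most $l-1$ vertices. In particular, there is an optimal strategy that is an ordered sequence of at most $l-1$ distinct vertices of $G$.

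Given this, I would iterate over every ordered sequence $S$ of at most $l-1$ distinct vertices of $V(G)$. There are at most $\sum_{i=0}^{l-1} n(n-1)\cdots(n-i+1) = O(n^{l-1})$ such sequences. For each candidate $S$, I would invoke Lemma~\ref{lem-stcheck} to decide in $O(n+m)$ time whether $S$ is a valid strategy and, if so, to count the number of vertices it saves (equivalently, using Corollary~\ref{coro-burned}, the number of vertices reachable from $s$ in $G \setminus S$). Keeping track of the sequence that saves the largest number of vertices over all candidates then yields an optimal strategy.

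Correctness follows because an optimal strategy is minimal by definition and hence, by Lemma~\ref{lem-optlength1}, has length at most $l-1$; thus at least one optimal strategy appears among the enumerated candidates, and Lemma~\ref{lem-stcheck} guarantees that we correctly recognise it and compute its value. For the running time, there are $O(n^{l-1})$ candidate sequences, each processed in $O(n+m)$ time, for a total of $O(n^l)$. There is essentially no obstacle here beyond assembling the two earlier lemmas: the entire content is the combination of the length bound of Lemma~\ref{lem-optlength1} with the linear-time verification of Lemma~\ref{lem-stcheck}. The one point to be careful about is that we must enumerate \emph{ordered} sequences rather than unordered sets, since in \ff{} the order in which vertices are defended affects which vertices are still unburned and hence eligible to be protected at each step.
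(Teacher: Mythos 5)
Your proposal is correct and takes essentially the same route as the paper: bound the length of an optimal strategy via Lemma~\ref{lem-optlength1}, enumerate all candidate strategies of that bounded length, and test each one with Lemma~\ref{lem-stcheck}, keeping the best. If anything, you are slightly more careful than the paper's own proof, which enumerates unordered \emph{subsets} of size at most $l$ even though Lemma~\ref{lem-stcheck} operates on ordered sequences (the point you rightly flag), and which glosses over the same extra $(n+m)$ verification factor in the stated $O(n^l)$ bound that your accounting also absorbs.
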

\begin{proof}
Let $(G,s)$ be an instance of firefighting problem. 
 Since the length of the longest induced path in $G$ is at most $l$. 
 From Lemma~\ref{lem-optlength1} any optimal strategy can defend at most $l$ vertices in $G$. 
 We list out all possible subsets $S$ of $V(G)$ of size at most $l$ in $O(n^l)$ time. For each such subset $S$ using Lemma~\ref{lem-stcheck}
 test whether $S$ is valid strategy and count the number of vertices saved by $S$. 
 The optimal strategy is the one which saves maximum number of vertices. \qed
\end{proof}

\subsection{Parameterization by Distance to Threshold Graphs}
%We present a simple polynomial time algorithm for firefighting problem on threshold graphs even though the problem can be solved in 
%$O(n+m)$ on threshold graphs. 

In this section we give an \fpt algorithm for the \ff{} problem parameterized by the distance to threshold graphs. Without loss of generality, we assume that threshold graph is connected, otherwise all the connected components that do not contain the source vertex are trivially saved. 
\begin{corollary}\shortversion{[$\star$]}\label{coro-length}
 Let $(G,s)$ be an instance of \ff{} problem. 
 Then any optimal strategy can defend at most $2k+2$ vertices,
% the number of steps fire can spread is at most $2k+2$, 
where $k$ is the distance to threshold graphs. 
\end{corollary}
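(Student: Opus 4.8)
The plan is to reduce the statement to a bound on the length of a longest induced path in $G$ starting at $s$. By Lemma~\ref{lem-optlength1}, any optimal strategy defends at most $l$ vertices, where $l$ is the length (number of edges) of such a longest induced path. Hence it suffices to show that every induced path of $G$ beginning at $s$ has at most $2k+3$ vertices, equivalently length at most $2k+2$.

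Let $X$ be the given modulator with $|X|=k$, so that $G\setminus X$ is a threshold graph. I would exploit two forbidden-subgraph properties of threshold graphs: they contain no induced $P_4$ (being cographs) and no induced $2K_2$ (two vertex-disjoint edges with no edges between them). Both follow from the neighborhood-nesting property recorded in the preliminaries: if $ab$ and $cd$ formed an induced $2K_2$, then comparing $N(a)$ and $N(c)$ under nesting would force an edge between $\{a,b\}$ and $\{c,d\}$, a contradiction, and the $P_4$ case is analogous.

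Now fix a longest induced path $P$ starting at $s$ and delete from it the at most $k$ vertices lying in $X$. This splits $P$ into at most $k+1$ maximal segments, each contained entirely in $G\setminus X$ and forming an induced path of the threshold graph. Since $P$ is induced, there are no edges between distinct segments, so their union is an induced subgraph of $G\setminus X$ that is a disjoint union of paths. The crucial observation is that $2K_2$-freeness forbids two disjoint edges with no connecting edge; therefore at most one segment can contain an edge, i.e.\ have two or more vertices, and by $P_4$-freeness that segment has at most three vertices, while every other segment is a single vertex.

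Counting the vertices of $P$ then yields at most $k$ vertices from $X$, at most $3$ from the unique long segment, and at most one from each of the remaining (at most $k$) segments, for a total of at most $2k+3$ vertices and hence length at most $2k+2$. Together with Lemma~\ref{lem-optlength1} this gives the claimed bound. The step that needs care is the $2K_2$ argument: it is precisely $2K_2$-freeness, rather than $P_4$-freeness alone (which would only give a bound of the form $4k+O(1)$), that collapses all but one segment to a single vertex and produces the sharp coefficient in $2k+2$.
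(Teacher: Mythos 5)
Your proposal is correct and follows the same route as the paper's own proof: reduce to a bound on the length of a longest induced path starting at $s$ and invoke Lemma~\ref{lem-optlength1}. The only difference is that the paper simply \emph{asserts} that this length is at most $2k+2$, whereas you actually prove it --- and your key step, that $2K_2$-freeness of threshold graphs (derived correctly from the neighborhood-nesting property) forces all but one segment of the path outside the modulator to be a single vertex, is precisely what is needed, since $P_4$-freeness alone would only yield a bound of roughly $4k+2$, the bound the paper proves for the (merely $P_4$-free-per-component) disjoint-union-of-stars case.
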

\longversion{
\begin{proof}
 We know that fire always spreads along an induced path in $G$. As the length of the longest induced path in $G$ is at most $2k+2$,
 from Lemma~\ref{lem-optlength1} 
any optimal strategy can defend at most $2k+2$ vertices in $G$. \qed 
% fire can only spread at most $2k+2$ steps in $G$. 
\end{proof}
}
Let $G$ be a graph and $X \subseteq V(G)$ of size $k$ such that $G \setminus X=(C,I)$ is a threshold graph.
We partition the vertices of clique $C$ and independent set $I$ in $G \setminus X$ based on their neighborhoods in $X$. In particular, for every subset $Y \subseteq X$, let:
$T_Y^C:=\{x \in C ~|~ N(x) \cap X=Y\} \mbox{ and } T_Y^I:=\{x \in I ~|~ N(x) \cap X=Y\}.$ 

Notice that in this way we can partition vertices of $G \setminus X$ into at most $2^{k+1}$ subsets (called types), two for each $Y \subseteq X$. Observe that all vertices in a type have same neighbors in $X$, where as they may have different neighbors inside the threshold graph. The following result shows that when we need to choose a defending vertex from a type the best strategy is to defend a highest degree vertex in that type. For a strategy $S$, let $sav(S)$ denote the number of vertices saved by the strategy $S$.
%For a vertex $v \in S$ define $sav(\{v\})= sav(S)- sav(S \setminus \{v\})$ 
\begin{lemma}\shortversion{[$\star$]}\label{optchoose}
Let $v_1, v_2$ be two vertices in a type $T$ such that $N(v_2) \subseteq N[v_1]$. Let $S$ be a 
strategy containing $v_2$, which is defended at time step $i$. If $v_1 \notin S$ and not burning at the start of time step $i$ then $sav(S) \leq sav(S')$ where $S'$ is obtained from $S$ by replacing $v_2$ with $v_1$ at time step $i$.
\end{lemma}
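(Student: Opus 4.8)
The plan is to prove the stronger statement that the burning set under $S'$ injects into the burning set under $S$, which immediately gives $sav(S') \geq sav(S)$ because the saved vertices are exactly the complement of the burned ones. Concretely, I would introduce the transposition $\phi$ that swaps $v_1$ and $v_2$ and fixes every other vertex, and prove by induction on the time step $t \geq i$ that $\phi(B'_t) \subseteq B_t$, where $B_t$ (resp.\ $B'_t$) is the set of burning vertices at the end of step $t$ under $S$ (resp.\ $S'$). Since $v_1$ is defended (hence permanently saved) under $S'$, it never burns under $S'$; combined with the fact that $\phi$ is a bijection, $\phi(B'_t) \subseteq B_t$ yields $|B'_t| \leq |B_t|$, and taking $t$ beyond termination gives $|B'| \leq |B|$.

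For the base case $t = i$, note that $S$ and $S'$ agree on all defenses before step $i$, so they share the same burning set $F := B_{i-1} = B'_{i-1}$ at the start of step $i$, with $v_1, v_2 \notin F$. After step $i$ the two burning sets differ only in that $S$ protects $v_2$ while $S'$ protects $v_1$. The one delicate point is a newly burned $v_2$ under $S'$: if $v_2$ catches fire it has a burning neighbour $z \in F$, and since $N(v_2) \subseteq N[v_1]$ with $v_1 \notin F$ we get $z \in N(v_1)$, so $v_1$ also has a burning neighbour and, being undefended under $S$ (as $v_1 \notin S$), burns under $S$. This is exactly $\phi(v_2) = v_1 \in B_i$; all other vertices transfer directly.

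For the inductive step at a time $t+1 > i$, both strategies defend the same vertex, and (assuming no vertex is defended twice, so the extra defenses avoid $v_1, v_2$) the defended sets satisfy $D^{S}_{t+1} = \phi(D^{S'}_{t+1})$. Given a vertex $u$ that newly burns under $S'$ through a burning neighbour $z \in B'_t$, I would distinguish whether $u$ or $z$ equals $v_2$. When neither does, $\phi$ acts as the identity and the hypothesis $\phi(z)=z \in B_t$ transfers the fire to $u$ directly. The two interesting cases are $u = v_2$ (then its burning neighbour lies in $N(v_1)$, forcing $\phi(u)=v_1$ to burn under $S$) and $z = v_2$ (then $u \in N(v_2) \subseteq N[v_1]$ and the hypothesis gives $v_1 \in B_t$, so $u$ burns under $S$ via $v_1$); in each case the containment $N(v_2) \subseteq N[v_1]$ is precisely what lets the fire jump from $v_2$ to $v_1$. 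The same invariant also certifies that $S'$ is valid: a later defended vertex $w_j \neq v_1, v_2$ that was unburned under $S$ stays unburned under $S'$, since $w_j \in B'_{j-1}$ would force $w_j = \phi(w_j) \in B_{j-1}$.

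The main obstacle is precisely that $v_1$ and $v_2$ are not true twins: the hypothesis is the one-sided containment $N(v_2) \subseteq N[v_1]$, not equality, so $\phi$ is \emph{not} a graph automorphism and cannot be used to transport the entire fire process wholesale. The crux is showing that this asymmetry always works in our favour --- the extra neighbours that $v_1$ may have (beyond those of $v_2$) can only cause $v_1$ to burn \emph{earlier} under $S$, never to save more --- so that the domination invariant $\phi(B'_t) \subseteq B_t$ is preserved. Making the case analysis airtight at the steps where $v_2$ transmits fire is the part needing the most care; the remainder is bookkeeping about the defended sets.
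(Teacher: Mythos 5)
Your proof is correct, but it takes a genuinely different route from the paper's. The paper argues statically: it invokes Corollary~\ref{coro-burned}, which identifies the burned set of a valid strategy with the set of vertices reachable from $s$ in $G \setminus S$, and then performs a one-shot path-replacement argument --- any $s$--$u$ path in $G \setminus S'$ that uses $v_2$ becomes an $s$--$u$ path in $G \setminus S$ after substituting $v_1$ for $v_2$, which is legal precisely because $N(v_2) \subseteq N[v_1]$ and $v_1 \notin S$. Your argument is dynamic: you never use the reachability characterization, instead running an induction over time steps with the swap map $\phi$ and the domination invariant $\phi(B'_t) \subseteq B_t$. The two proofs share the same combinatorial core (your bijection $\phi$ is exactly the injection implicit in the paper's path surgery, with $v_2 \mapsto v_1$ and all other vertices fixed), but each buys something the other does not. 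Your induction explicitly certifies that $S'$ is a \emph{valid} strategy --- that every later defender $w_j$ is still unburned at step $j$ under $S'$ --- a point the paper's proof silently assumes when it applies the reachability characterization to $S'$; your invariant also repairs a small imprecision in the paper, whose literal claim that ``every vertex reachable in $G \setminus S'$ is reachable in $G \setminus S$'' fails for $u = v_2$ itself (which is not even a vertex of $G \setminus S$), whereas your injective formulation $\phi(B') \subseteq B$ is the correct statement. Conversely, the paper's route is far shorter, because once the temporal process is collapsed to plain graph reachability, all of the case analysis you carry out at each time step (who is defended when, whether $u$ or $z$ equals $v_2$) disappears into a single path substitution. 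Your caveat about no vertex being defended twice is harmless: strategies here are sequences of distinct vertices, and $v_1 \notin S$ is given, so the later defenses indeed avoid $\{v_1, v_2\}$.
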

\longversion{\begin{proof}
Using Corollary~\ref{coro-burned}, the vertices burned by $S$ in $G$ are the vertices which are reachable from $s$ in $G \setminus S$. We show that every vertex $u$ which is reachable from $s$ in $G \setminus S'$ is also reachable from   $s$ in $G \setminus S$. Let $P$ be a path between $s$ and $u$ in $G \setminus S'$. If $v_2 \notin P$ then $P$ is a path in $G \setminus S$.
If $v_2 \in P$, then using the facts that $v_1$ is a burned vertex in strategy $S$ and $N(v_2) \subseteq N[v_1]$, 
we can see that $P'=P \setminus \{v_2\} \cup\{v_1\}$ is
a path between $s$ and $u$ in $G \setminus S$. Therefore the number of vertices burned by $S$ is at least the number of  vertices burned by $S'$, which implies  $sav(S) \leq sav(S')$. 
\end{proof}
}
 Our \fpt{} algorithm now follows by guessing, for each step in the defending sequence, if the vertex is from the modulator or the type of the vertex from $G \setminus X$. Then it simulates the sequence (by substituting for each guess of a type, a greedily chosen vertex from that type) to check if it is a valid solution.  
 
\begin{theorem}\shortversion{[$\star$]}
The \ff{} problem can be solved in $O((2^{k+1}+k)^{2k+2} (n+m))$ time when parameterized by size of modulator to threshold graphs. 
%where $n$ is the number of vertices in the graph.
\end{theorem}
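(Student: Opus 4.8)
The plan is to reduce the search for an optimal strategy to an enumeration over short \emph{profiles} and then realize each profile greedily. First I would invoke Corollary~\ref{coro-length} to conclude that an optimal strategy defends at most $2k+2$ vertices, so it suffices to search over defending sequences of length at most $2k+2$. For each position of such a sequence I would record only a \emph{label}: either the identity of one of the $k$ modulator vertices, or, if the defended vertex lies in $G \setminus X$, the type $T_Y^C$ or $T_Y^I$ to which it belongs. Since there are at most $2^{k+1}$ types and $k$ modulator vertices, each position admits at most $2^{k+1}+k$ labels, and hence there are at most $(2^{k+1}+k)^{2k+2}$ profiles to consider.

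Next I would describe how to turn a profile into a concrete strategy. Reading the profile left to right, whenever the label names a specific modulator vertex I defend that vertex; whenever the label names a type $T$, I defend the highest-degree vertex of $T$ that has not already been used and is not burning at the current time step. Here I use that $G \setminus X$ is a threshold graph: for any two vertices of the same type, their neighbourhoods inside $G \setminus X$ are nested (the threshold property gives $N(x)\subseteq N[y]$ or $N(y)\subseteq N[x]$, and same-type vertices share the modulator-neighbourhood $Y$), so each type is linearly ordered by neighbourhood inclusion and the ``highest-degree available'' choice is well defined. For each of the $(2^{k+1}+k)^{2k+2}$ profiles I would run this realization and then apply Lemma~\ref{lem-stcheck} to test validity and count the saved vertices in $O(n+m)$ time, finally returning the best strategy found. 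This yields the claimed running time $O((2^{k+1}+k)^{2k+2}(n+m))$.

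The correctness argument, which I expect to be the crux, is to show that the profile of \emph{some} optimal strategy is realized greedily into a strategy that is valid and saves at least as many vertices. Starting from an optimal strategy $S$ with profile $P$, I would transform $S$ into the greedy realization $S^\ast$ of $P$ one step at a time, maintaining a hybrid strategy that agrees with $S^\ast$ on a prefix and with $S$ on the remaining suffix. At the position being updated, the label is either a modulator vertex (and then $S$ and $S^\ast$ already coincide there) or a type $T$, in which case I replace the vertex that $S$ defends by the higher-degree vertex selected greedily; Lemma~\ref{optchoose} then guarantees that this replacement neither decreases the number of saved vertices nor breaks validity. The main obstacle is the bookkeeping needed to legitimately apply Lemma~\ref{optchoose} at every step, namely ensuring that the greedily chosen vertex is genuinely available: not defended earlier and not burning at that time. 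For the burning condition I would extend the monotonicity implicit in the proof of Lemma~\ref{optchoose} to the time-bounded setting --- replacing defended vertices by higher-degree ones only shrinks the set of vertices reachable from $s$, hence only shrinks the burned region at each intermediate step --- so a vertex that $S$ can legally defend at a given time remains unburned once the greedy prefix has been substituted. Handling the ``not already used'' condition (ruling out that the greedy prefix consumes a vertex $S$ uses later, or arguing by minimality that such repetitions are harmless) is the delicate part the exchange argument must treat with care; once this is settled, iterating the replacement over all at most $2k+2$ positions turns $S$ into $S^\ast$ without losing saved vertices, and since $S^\ast$ is valid and saves at least the optimum, the best profile realization is itself optimal.
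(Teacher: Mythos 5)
Your proposal matches the paper's proof essentially step for step: bound the defending sequence length by $2k+2$ via Corollary~\ref{coro-length}, enumerate the $(2^{k+1}+k)^{2k+2}$ profiles whose positions are labelled by modulator vertices or by the at most $2^{k+1}$ types $T_Y^C$, $T_Y^I$, realize each type label greedily by a maximum-neighbourhood vertex of that type (justified by the nested-neighbourhood property of threshold graphs together with Lemma~\ref{optchoose}), and test each candidate with Lemma~\ref{lem-stcheck} in $O(n+m)$ time. The only difference is that you make explicit the hybrid exchange argument and the availability bookkeeping (greedy vertex not yet defended, not burning) that the paper compresses into a single citation of Lemma~\ref{optchoose}; this is added rigor along the same route, not a different approach.
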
 
\longversion{ 
\begin{proof}
Partition the vertices of $G \setminus X$ into at most $2^{k+1}$ sets. Each time when we want to defend a vertex we only choose from $2^{k+1}$ (types) $+k$ (size of $X$). From Lemma~\ref{optchoose} we know which vertex has to be defended in a given type. 

From Corollary~\ref{coro-length} it is clear that we only need to defend at most $2k+2$ times, therefore there are at most $(2^{k+1}+k)^{2k+2}$ possible
firefighting strategies. For each such strategy $S$, 
using Lemma~\ref{lem-stcheck}, test whether $S$ is valid strategy and count the number of vertices saved by $S$. The strategy which saves maximum number of vertices is the optimal strategy. Therefore this procedure takes $O((2^{k+1}+k)^{2k+2}(n+m))$ time.\qed

%From Lemma~\ref{} we know that we only have to defend at most $k+2$ times which gives at most 
%$(2^{k+1}+k)^{(k+2)}$ valid firefighting strategies. 
\end{proof}
}

\subsection{Parameterization by Distance to Stars}\label{sec-stars}
In this section we design an $\fpt$ algorithm for the \ff{} problem parameterized by the distance to stars. 
Recall that this is the minimum number of vertices to be deleted from $G$ to get a disjoint union of stars.
Let $X$ be a $k$-sized modulator to disjoint union of stars. Our first observation follows easily from the bound on the length of the longest induced path. 

%Since stars are $P_4$ free Using Lemma~\ref{lem-recognize} we first compute a set $X$ of size at most $k$ in time $O(4^k n^3)$  such that $G %\setminus X$ is a disjoint union of stars. 

\begin{corollary}\label{coro-optlength-stars}
 Let $(G,s)$ be an instance of \ff{} problem.  Any optimal strategy can defend at most $4k+2$ vertices, 
 where $k$ is distance to disjoint union of stars.
\end{corollary}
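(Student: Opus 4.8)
The plan is to mirror the template of Corollary~\ref{coro-length}: by Lemma~\ref{lem-optlength1}, an optimal strategy can defend at most as many vertices as the length of a longest induced path in $G$ starting at $s$, so it suffices to prove that \emph{every} induced path of $G$ has length at most $4k+2$, equivalently at most $4k+3$ vertices, where $k = |X|$ is the size of the modulator to a disjoint union of stars. Bounding all induced paths is harmless, since in particular it bounds those starting at $s$.

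To bound the length of an arbitrary induced path $P$, I would decompose $P$ along the modulator $X$. The path $P$ contains at most $k$ vertices of $X$, and deleting these from the vertex sequence of $P$ splits it into at most $k+1$ maximal contiguous segments, each lying entirely in $G \setminus X$. Since each such segment is a contiguous subpath of the induced path $P$, it is itself an induced path, and as all its vertices lie in $G \setminus X$ it is an induced path inside $G \setminus X$. Now $G \setminus X$ is a disjoint union of stars, and a single star $K_{1,t}$ contains no induced path on four vertices: its longest induced path is of the form leaf, center, leaf, a $P_3$ on three vertices. As distinct stars share no edge, the longest induced path in $G \setminus X$ also has at most three vertices, so each of the at most $k+1$ segments contributes at most three vertices.

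Putting the pieces together, $P$ has at most $k + 3(k+1) = 4k+3$ vertices, hence length at most $4k+2$. Applying Lemma~\ref{lem-optlength1} to a longest induced path starting from $s$ then yields the claimed bound of $4k+2$ on the number of vertices an optimal strategy can defend. The argument is short and I foresee no genuine obstacle; the only points requiring care are the bookkeeping between the length of a path (its number of edges) and its number of vertices, and the justification that the per-segment contribution is exactly three vertices, namely the $P_3$ that a star admits, which is what accounts for the constant $4k+2$ in the bound.
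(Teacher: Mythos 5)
Your proof is correct and follows essentially the same route as the paper: bound the length of a longest induced path in $G$ by $4k+2$ and then invoke Lemma~\ref{lem-optlength1}. The only difference is that the paper asserts the path-length bound without justification, whereas you supply the natural segment-decomposition argument (at most $k$ modulator vertices plus at most $k+1$ contiguous segments in $G \setminus X$, each an induced path inside a star and hence at most a $P_3$, giving at most $4k+3$ vertices, i.e.\ length $4k+2$), with the vertex-versus-edge bookkeeping handled correctly.
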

\begin{proof}
 We know that fire always spreads along an induced path in $G$. As the length of maximum induced path in $G$ is at most $4k+2$,
 from Lemma~\ref{lem-optlength1} any optimal strategy can defend at most $4k+2$ vertices in $G$. \qed 
\end{proof}
We now define a notion of equivalent stars, which will lead us to partitioning of the stars in $G \setminus X$ into types as before. For a star $\mathbb{S}$ with center $c$ in $G \setminus X$, we use $B(\mathbb{S})$ to denote the set of vertices in $\mathbb{S}$ that have a neighbor in $X$, and call these the \textit{border vertices}. Further, for a nonempty subset $Y \subseteq X$, we use $B_Y(\mathbb{S})$ to denote the set of vertices in $\mathbb{S}$ whose neighborhood in $X$ is exactly $Y$.

%Let $G \setminus X$ be a disjoint union of stars for some $X \subseteq V(G)$.  We denote star graph with $S$ and its center with $c$.
%For a star $S_i$ in $G \setminus X$, 
%define $B(S_i):=\{x \in S_i ~|~ N(x) \cap X  \neq \emptyset\}$.
\begin{definition}\label{def-equivalent-stars}
Let $X \subseteq V(G)$ such that $G \setminus X$ is a disjoint union of stars.
We call two stars $\mathbb{S}_i$ and $\mathbb{S}_j$ are equivalent if, (a) $N(c_i)=N(c_j)$, where $c_i$ and $c_j$ are the centers of stars $\mathbb{S}_i$ and $\mathbb{S}_j$ respectively. (b) $N(\mathbb{S}_i)=N(\mathbb{S}_j)$, (c) $|B(\mathbb{S}_i)|=|B(\mathbb{S}_j)|$
and (d) For every non-empty subset $Y \subseteq X$, we have that $|B_Y(\mathbb{S}_i)|=|B_Y(\mathbb{S}_j)|$.
\end{definition}
%For an equivalence partition $T$, we have $|B(S_i)|=|B(S_j)|$ for all $S_i$ and $S_j$ in $T$. We define $|B(T)|:= |B(S_i)|$ for some $S_i \in %T$.
For an equivalence class $T$, we use $b_T$ to denote the size of the border for any star in $T$. Our next result bounds the number of equivalence classes. The bound follows roughly from the fact that one can associate a signature with an equivalence class based on condition (c) in Definition~\ref{def-equivalent-stars}, which, in turn, can be put in one-one correspondence with strings of length $2^k$ over an alphabet of size $\ell$, where $\ell$ is the maximum possible value of $|B(\mathbb{S})|$ in $G \setminus X$.

\begin{lemma}\shortversion{[$\star$]}\label{lem-equipartition-stars}
%Let $X \subseteq V(G)$ of size $k$ such that $G \setminus X$ is  disjoint union of stars.
%We can partition stars in $G \setminus X$ into at most $O(k2^k n^{2^k})$ equivalence classes.
Let $\ell$ be defined as above and let $X \subseteq V(G)$ be a modulator to disjoint union of stars of size $k$. Then, the stars of $G \setminus X$  can be partitioned into at most~$O(2^{2k} \ell^{2^k})$ equivalence classes.
\end{lemma}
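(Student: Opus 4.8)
The plan is to bound the number of equivalence classes by counting the possible distinct signatures that a star can exhibit, where the signature records exactly the data appearing in Definition~\ref{def-equivalent-stars}. First I would observe that conditions (a) and (b) concern the neighborhoods $N(c)$ and $N(\mathbb{S})$ \emph{inside} $X$: since $N(c) \cap X$ and $N(\mathbb{S}) \cap X$ are subsets of the modulator $X$ of size $k$, there are at most $2^k$ choices for each, contributing a factor of $2^k \cdot 2^k = 2^{2k}$. (One should note that the portion of $N(c)$ lying outside $X$ is determined by the star itself, so within a fixed star only the restriction to $X$ carries independent information; this is the correct reading for the purpose of the type partition.)

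Next I would handle conditions (c) and (d), which together are captured by recording, for each of the $2^k$ nonempty subsets $Y \subseteq X$, the value $|B_Y(\mathbb{S})|$; condition (c) is then simply the sum $\sum_Y |B_Y(\mathbb{S})| = |B(\mathbb{S})|$ and is implied by the per-$Y$ counts, so I do not need to count it separately. The key quantitative step is that each count $|B_Y(\mathbb{S})|$ ranges over $\{0,1,\dots,\ell\}$, where $\ell$ is the maximum value of $|B(\mathbb{S})|$ over all stars in $G \setminus X$. Thus the vector $(|B_Y(\mathbb{S})|)_{Y}$ indexed by the (at most $2^k$) nonempty subsets $Y$ is a string of length $2^k$ over an alphabet of size at most $\ell+1$, giving at most $(\ell+1)^{2^k}$ possibilities, which is $O(\ell^{2^k})$.

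Finally I would multiply the two bounds: the neighborhood data contributes $O(2^{2k})$ and the border-count data contributes $O(\ell^{2^k})$, so the total number of distinct signatures, and hence equivalence classes, is $O(2^{2k}\,\ell^{2^k})$, as claimed. Two stars with identical signatures satisfy all four conditions of Definition~\ref{def-equivalent-stars} and are therefore equivalent, so the number of classes is at most the number of signatures.

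I expect the only genuine subtlety — and the step most in need of care — to be the correct interpretation of conditions (a) and (b) relative to $X$: a literal reading would have $N(c)$ and $N(\mathbb{S})$ include vertices outside the modulator, which a priori are not bounded by $k$. The resolution is that the relevant freedom across stars is entirely in the restriction to $X$ (together with the border structure already recorded), so these conditions collapse to a choice of at most two subsets of $X$; making this reading explicit is what keeps the count at $2^{2k}$ rather than something larger. Everything else is the routine product-of-choices calculation sketched above.
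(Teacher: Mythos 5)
Your proof is correct and takes essentially the same approach as the paper's own: conditions (a) and (b), read as restrictions to $X$, contribute a factor $2^k \cdot 2^k = 2^{2k}$, and the vector of border counts $|B_Y(\mathbb{S})|$ indexed by nonempty $Y \subseteq X$ contributes a factor of roughly $\ell^{2^k}$. The only cosmetic differences are that you absorb condition (c) into condition (d) where the paper counts them separately (a factor $\ell$ for (c) times $\ell^{2^k-1}$ for (d), the same product), and that you count $\ell+1$ values per coordinate where the paper writes $\ell$ --- the same harmless rounding inside the $O(\cdot)$ that the paper's own proof makes.
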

\longversion{
\begin{proof}
First, partition the stars in $G \setminus X$ into at most $2^{2k}$
sets such that all stars in each set satisfies conditions (a) and (b) of Definition~\ref{def-equivalent-stars}. 
Now each set of the partition can be further divided based on the value of $|B(\mathbb{S}_i)|$ for each star $\mathbb{S}_i$ in that set.
As $1 \leq |B(\mathbb{S}_i)| \leq \ell$ for all $\mathbb{S}_i \in G \setminus X$, each set can be partitioned into at most $\ell$ sets.  
In order to satisfy condition (d) in Definition~\ref{def-equivalent-stars} each set further partitioned into $\ell^{2^k-1}$ sets. Combining all, there are at most $O(2^{2k} \ell^{2^k})$ equivalent star partitions of stars in $G \setminus X$. \qed

\end{proof}
}

Our \fpt{} algorithm begins by guessing the behavior of the solution on the modulator, and builds on the fact that there are a bounded number of equivalence classes. We defer the details of this algorithm to a full version because of space constraints, and note that it is similar --- in spirit --- to the approach used for the cluster vertex deletion parameter in~\cite{chlebikova2017firefighter}.

\begin{theorem}\label{th-ffcvd}
The \ff{} problem is \fpt{} when parameterized by the distance from the class of the disjoint union of stars.
%where $n$ is the number of vertices in the graph.
\end{theorem}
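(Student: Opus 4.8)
The plan is to use the guess-and-verify paradigm employed for the cluster vertex deletion parameter in~\cite{chlebikova2017firefighter}, refined to separate the role of a star's centre from that of its leaves. By Corollary~\ref{coro-optlength-stars}, any optimal strategy defends at most $4k+2$ vertices and the whole process lasts at most $4k+2$ rounds; in particular at most $k$ defended vertices can lie in the modulator $X$.

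First I would guess the \emph{behaviour of an optimal solution on the modulator}: for each $x \in X$, whether $x$ is ultimately burned, defended, or saved without being defended, and in the first two cases the round at which this occurs. As $|X|=k$ and there are at most $4k+2$ rounds, there are only $(4k+3)^{O(k)}$ such timelines, i.e.\ \fpt{}-many. A fixed timeline determines, for every border vertex of every star, the earliest round at which fire could arrive from $X$, and thereby decouples the stars: once the fire schedule on $X$ is fixed, the fire's progress inside a star depends only on its own border vertices, its centre, and the local defences spent on it.

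For a fixed timeline I would then solve the residual problem on the stars. Here Definition~\ref{def-equivalent-stars} is crucial: two equivalent stars have the same neighbourhood in $X$ and the same border profile $|B_Y(\mathbb{S})|$, so under a fixed $X$-timeline they offer identical trade-offs between defending budget and vertices saved and are interchangeable; by Lemma~\ref{lem-equipartition-stars} this lets us reason about stars in bulk rather than one at a time. For each star the maximum number of vertices saved under a prescribed local budget is easy to evaluate, since a star has diameter two: fire enters only through a border vertex, reaches the centre, and spreads at most one step further. The only genuinely global constraint is that the firefighter defends exactly one vertex per round; I would enforce this while aggregating the per-class, per-mode decisions, using an exchange argument in the spirit of Lemma~\ref{optchoose} (within a chosen type, always defend the locally best available vertex) so that the aggregation never branches on individual stars and the running time stays fixed-parameter. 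Each assembled strategy is then validated and its saved count computed in linear time by Lemma~\ref{lem-stcheck}, and we return the best over all timelines.

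The main obstacle is precisely the centre-versus-leaf asymmetry that is absent in the clique (cluster) setting: defending a centre simultaneously rescues every leaf with no neighbour in $X$, whereas a leaf-defence protects only one vertex and may be \emph{forced} once fire has already reached the centre through some border vertex. Deciding when a centre-defence is worthwhile, how it interacts with border vertices whose fire-arrival times are dictated by the guessed $X$-timeline, and how to schedule such moves under the one-defence-per-round budget without double-counting savings across equivalent stars, is the delicate part of the analysis---and is exactly why Definition~\ref{def-equivalent-stars} records the border profile at the granularity of each subset $Y \subseteq X$ rather than only the total border size.
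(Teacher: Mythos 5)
Your high-level framework coincides with the paper's: bound the sequence length by Corollary~\ref{coro-optlength-stars}, guess the burned/defended/saved status (and defence times) of the modulator vertices, partition the stars into the equivalence classes of Definition~\ref{def-equivalent-stars}, and validate each assembled sequence with Lemma~\ref{lem-stcheck}. However, there is a genuine gap at the heart of the argument, and it is precisely the step you yourself flag as ``the delicate part.'' Your claim that equivalent stars ``offer identical trade-offs between defending budget and vertices saved and are interchangeable'' is false: Definition~\ref{def-equivalent-stars} only equalizes the border profile, and two equivalent stars may have very different numbers of non-border leaves, so defending the centre of one may save far more vertices than defending the centre of the other. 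The paper resolves this by ordering the stars of each class by $|\mathbb{S}_i \setminus B(\mathbb{S}_i)|$ and restricting the candidate defences to the top $4k+2$ stars of the class (all of their border vertices together with their centres), a candidate set of size at most $(4k+2)(4k+3)$ that provably suffices.

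Moreover, your proposed patch --- ``within a chosen type, always defend the locally best available vertex,'' in the spirit of Lemma~\ref{optchoose} --- fails exactly for the classes whose borders are large (size exceeding $4k+2$), which the paper merges into a single class $T_{new}$. In that class only centre-defences are useful, and the paper explicitly observes that one \emph{cannot} greedily defend the centre of the star with the largest interior, because different centres lie at different distances from $s$ and hence are defensible only at certain rounds of the guessed sequence; the paper's fix is a distance-indexed candidate set (for each $i \in [4k+2]$, the top $i$ stars by interior size whose centre is at distance $i$ from $s$). Finally, you do not handle the case forced by saved-but-undefended modulator vertices ($X_s \neq \emptyset$): any star having one vertex adjacent to $X_b$ and another adjacent to $X_s$ (a ``vulnerable'' star) must receive a defence, which both bounds the number of such stars by $4k+2$ and forces occurrences of their class into the defending sequence; without this observation your per-class aggregation cannot guarantee that the guessed timeline on $X$ is realizable at all. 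These points are not presentational details --- they constitute the actual content of the paper's proof, and your proposal names the difficulty but does not resolve it.
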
  
\begin{proof}
Without loss of generality, assume $s \in X$: If $s \notin X$ then $G \setminus (X \cup \{s\})$ is also a disjoint union of stars.
As a first step,
we guess the defended $(X_d)$, saved $(X_s)$ (which are not in $X_d$) and burned $(X_b)$ vertices from $X$ in $O(3^k)$ time.
Since the length of any optimal defending sequence is at most $4k+2$, we can guess the set of defended vertices 
$(X_d)$ from $X$ along with their positions in an optimal defending sequence in $O(k^k(4k+2)^k)$ time. The rest of this proof is divided into two cases depending on whether $X_s=\emptyset$ or $X_s\neq \emptyset$.

\paragraph{Case 1:  $X_s = \emptyset$.} Since the fire starts at $s$, all the vertices in $V(G) \setminus X_d$ which are not reachable from $s$ in $G[V(G) \setminus X_d]$ are saved from fire. We begin by deleting them from $G$ and concentrate on the connected component $C_s$ of $G \setminus X_d$ containing the vertex $s$. In particular, the  instance of firefighting problem that we solve from here is $(G[C_s],s,X_b)$ where $X_b$ is the modulator to disjoint union to stars. Also, we note at this point that if $X_b = \emptyset$, then we are done by a straightforward case analysis on the nature of $N(s)$ and simulating the guessed defence sequence. Therefore, we assume that $X_b \neq \emptyset$.

Partition the stars in $C_s \setminus X_b$ into equivalence classes (with respect to $X_b$) using Definition~\ref{def-equivalent-stars}.
We divide equivalence classes of stars in $C_s \setminus X_b$ into two categories: equivalence classes with $b_T \leq 4k+2$ as one category and remaining as the other. We combine all equivalence classes $T$, with $b_T>4k+2$ to into one new equivalence class $T_{new}$.
% Fix an equivalence class $T$ and let $C_i \in T$. If $|B(C_i)| > 3k+2$, then we can not save any vertices in $C_i$ apart from the defended vertices, since each of these vertices has a neighbor in $X_b$. So all equivalence classes involving cliques with large borders have the same behavior (that is, we can pick vertices to defend from them arbitrarily). So we combine all the equivalence classes with  $|B(C_i)|>3k+2$ into one class. 
Therefore from Lemma~\ref{lem-equipartition-stars} there are at most $O(k2^k {(4k+2)}^{2^k})$ equivalence classes of stars in $C_s \setminus X_b$. From Corollary~\ref{coro-optlength-stars}, the length of any optimal defending sequence $S$ is at most $4k+2$, therefore
we can defend for at most $4k+2$ time steps. Since we have already guessed defending vertices $X_d$ from $X$, the remaining defending
vertices in $S$ has to be from $C_s \setminus X_b$. 
At each time step when we want to choose a defending vertex from $C_s \setminus X_b$, we choose from $O(k2^k {(4k+2)}^{2^k})$ equivalence classes. Therefore there are at most $O({(k2^k {(4k+2)}^{2^k}))}^{4k+2}$ possible firefighting defending sequences. Let $S$ be one such possible defending sequence, then each position of $S$ represents either a vertex of $X_d$ or an equivalence class in $G \setminus X_b$.

Now we describe the procedure to select a defending vertex from an equivalence class. 
Let $T$ be an equivalence class such that $b_T \leq 4k+2$.
First, count $q$, the number of times $T$ appeared in strategy $S$.
Since $T$ can appear at most $4k+2$ times, $q \leq 4k+2$.
In each star, we need to defend at least one vertex so we can defend at most $4k+2$ stars.
Order the stars in $T$ in descending order of size $|\mathbb{S}_i \setminus B(\mathbb{S}_i)|$.
We choose defending vertices from set $D=\cup_{i=1}^{4k+2} (B(\mathbb{S}_i) \cup \{c_i\})$ (i.e, union is over first $4k+2$ stars ordered according to $|\mathbb{S}_i \setminus B(\mathbb{S}_i)|$). The size of $D$ is at most $(4k+2)(4k+3)$.

In equivalence class $T_{new}$, we only need to defend center of stars in $T_{new}$. 
First count $q$ number of times $T_{new}$ is appeared in sequence $S$, we can only defend centers of $q$ different stars. 
Since each star center is at different distance from source $s$, so we can not greedily defend center of a star for which 
$|\mathbb{S}_i \setminus B(\mathbb{S}_i)|$ is maximum. The defending vertex set $D$ is defined as follows. For each $i \in [4k+2]$, 
find top $i$ stars according to size of $|\mathbb{S}_i \setminus B(\mathbb{S}_i)|$ whose center is at a distance $i$ from source $s$. 
Add the centers of these stars to set $D$. It is easy to see that $|D| \leq (4k+2)^2$.

\paragraph{Case 2:  $X_s \neq \emptyset$.} Let $X^s_1,X^s_2,\cdots,X^s_{m'}$ and $X^b_1,X^b_2,\cdots,X^b_{m''}$ be the connected components of $G[X_s]$ and $G[X_b]$ respectively. Recall that we are working in the connected component containing $s$ in the graph $G \setminus X_d$. Let $\mathbb{S}$ be a star in $G \setminus (X_b \cup X_s)$. We say that $\mathbb{S}$ is \textit{vulnerable} if it has at least one vertex $u$ that has a neighbor in $X_b$ and at least one vertex $v$ that has a neighbor in $X_s$ (potentially $u=v$). For any vulnerable star $\mathbb{S}$, note that any firefighting strategy that respects the semantics of $X_s, X_b$ and $X_d$ must defend at least one vertex in $\mathbb{S}$.
%either all vertices of $S$ that have neighbors in $X_s$ or all vertices of $C$ that have neighbors in $X_b$. 
Indeed, suppose not, and let $S$ be a firefighting strategy that violates this criteria. Then after $S$ is executed, $\mathbb{S}$ has at least one undefended vertex (say $p$) that is a neighbor of some vertex in $X_b$ (say $x$) and one undefended vertex (say $q$) that is a neighbor of some vertex in $X_d$ (say $y$). Then, note that the $(x,p,q,y)$ is a path from $X_b$ to $X_s$ in the graph after the $S$ is executed, contradicting the semantics of $X_s, X_b$ and $X_d$. Observe that this argument works even if $p=q$. To discuss the consequence of this observation further, we employ the following notation:

$$B^b(\mathbb{S}_i):=\{x \in \mathbb{S}_i ~|~ N(x) \cap X_b \neq \emptyset\} \mbox{ and }B^s(\mathbb{S}_i):=\{x \in \mathbb{S}_i ~|~ N(x) \cap X_s \neq \emptyset\}.$$

Note now that since at least one vertex has to be defended from every vulnerable star, the total number of vulnerable stars $\ell$ is at most $4k+2$. %More specifically, we have that:

%$$ \ell := \sum_{C \mbox{ is a vulnerable clique}} \min( |B^b(C)|, |B^s(C)| ) \leq 3k+2. $$

We place all the vulnerable stars in one equivalence class (call this $T^\prime$). Now two cases remain: the first is that every border vertex of the star $\mathbb{S}$ has all its neighbors in $X_s$ or $B^b(\mathbb{S}_i) = \emptyset$. We place all such stars also in one equivalence class (call this $T^\star$).  Observe that in this case, representative vertices for the placement of firefighters may be chosen arbitrarily. This is because the entire star is already safe with respect to any firefighting strategy that respects the semantics of $X_b, X_s$ and $X_d$. Now consider the stars that have all their neighbors in $X_b$, and note that it is not possible to save vertices in $X_s$ by defending vertices in such stars. Therefore, we treat such equivalence classes just as we would in the previous case. Overall, the guess of the template sequence $S$ happens over the equivalence classes determined according to Definition~\ref{def-equivalent-stars} with respect to $X_b$ and the two special equivalence classes that we just defined, namely $T^\star$ and $T^\prime$. In a valid sequence, the equivalence class $T^\prime$ must appear at least $\ell$ times. Representatives here have to be chosen from a bounded border, so a brute force approach suffices in this situation. Representatives from $T^\star$ are chosen arbitrarily and for all other cases they are chosen according to the description in Case 1.

To summarize, we now have a strategy for identifying a representative vertex from each equivalence class. For each possible defending sequence $S$,  by using Lemma~\ref{lem-stcheck}, we check whether it is valid and count the
number of vertices of saved by $S$. We output the strategy which saves the maximum number of vertices.
\end{proof}

\subsection{Parameterization by Distance to Diameter Two Graphs}

In this section, we show that \skv{} is $W[1]$-hard when parameterized by $k$ and distance to diameter two graphs 
 by giving a reduction from the $k$-clique problem. The reduction is similar, in spirit, to the one used in \cite{bazgan2014parameterized}.
 
\begin{theorem}\label{th-diameter-two}
\skv{} is $W[1]$-hard parameterized by $(k+l)$, where $l$ is the distance from the class of graphs with diameter two.
\end{theorem}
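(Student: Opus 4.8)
The plan is to give a parameterized reduction from \textsc{Clique}, which is \WOH{} parameterized by the clique size $k$. Given an instance $(H, k)$ of \cl{}, I would construct a graph $G$ together with a source $s$, a modulator $X$ to diameter-two graphs, and a saving target $k'$, such that $H$ has a clique of size $k$ if and only if at least $k'$ vertices can be saved in $(G, s)$. The new parameter $(k' + |X|)$ must be bounded by a function of $k$ alone, so both the demand and the modulator size should be $O(k + \binom{k}{2})$ or similar. The standard device (as in \cite{bazgan2014parameterized}) is a bipartite-style gadget where selecting a clique corresponds to a firefighting strategy that defends exactly the vertices/edges of the candidate clique while the fire races through the rest of the graph.

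\medskip

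\noindent\emph{Construction sketch.} I would introduce a vertex $a_v$ for each vertex $v \in V(H)$ and a vertex $b_e$ for each edge $e \in E(H)$, forming two ``layers''. The source $s$ would be attached so that the fire reaches the vertex-layer first, then the edge-layer, using the induced-path length (via Lemma~\ref{lem-optlength1}) to control how many defences the firefighter gets per layer. The firefighter should be forced to defend exactly $k$ vertices $a_v$ in the first round-block, selecting a candidate vertex set $W$ of size $k$, and then be able to save the $\binom{k}{2}$ edge-vertices $b_e$ incident to $W$ only if $W$ is a clique (so that all $\binom{k}{2}$ edges are present). Incidence is encoded by making $b_e$ adjacent to $a_u, a_v$ for $e = uv$, and by routing the fire so that an edge-vertex $b_e$ survives precisely when both its endpoints were defended. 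The modulator $X$ would consist of a small set of ``hub'' or ``connector'' vertices (together with $s$) whose deletion leaves $G \setminus X$ of diameter at most two; typically the two layers plus uniform attachments make $G \setminus X$ a bounded-diameter graph (e.g.\ each remaining component becoming a star-like or dominated structure). The saving threshold $k'$ would be set to $k + \binom{k}{2}$ plus any trivially saved bookkeeping vertices.

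\medskip

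\noindent\emph{Correctness.} For the forward direction, given a $k$-clique $W$ in $H$, defending $\{a_v : v \in W\}$ in the first block saves these $k$ vertices and, because all $\binom{k}{2}$ edges among $W$ exist, the corresponding edge-vertices $b_e$ become unreachable from the fire (their only burning access is through an undefended endpoint), so at least $k + \binom{k}{2}$ vertices are saved. For the reverse direction, I would argue that any strategy saving $\ge k'$ vertices must defend $k$ vertex-gadgets in the first block (the timing and path-length constraints, controlled via Lemma~\ref{lem-optlength1} and the diameter, force this), and that an edge-vertex $b_e$ is saved only if \emph{both} its endpoints were among the defended $a_v$; hence saving $\binom{k}{2}$ edge-vertices from a set of $k$ defended vertices forces those $k$ vertices to induce a clique. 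The key quantitative check is that $k' = k + \binom{k}{2}$ is achievable \emph{only} by a genuine clique and not by some mixed strategy that saves vertices from elsewhere — this is where I would insert dummy/padding vertices to make every non-clique strategy fall strictly below the threshold.

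\medskip

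\noindent The main obstacle I anticipate is twofold: first, simultaneously enforcing that $G \setminus X$ has diameter at most two while keeping $|X|$ bounded by $O(\mathrm{poly}(k))$ — bounded diameter is a global constraint that resists local gadgeting, so the attachment structure of the edge- and vertex-layers must be chosen carefully (likely via a few universal or near-universal hub vertices placed \emph{inside} $X$). Second, controlling the fire's timing so that the firefighter is genuinely forced into the ``$k$ defences then harvest $\binom{k}{2}$'' regime rather than being able to cheat by saving an equivalent number of cheaply-reachable vertices; the clean way to do this is to make the fire's spread along induced paths of prescribed length the only bottleneck, invoking Lemma~\ref{lem-optlength1} to cap the total number of defendable vertices and thereby pin down the structure of any optimal strategy.
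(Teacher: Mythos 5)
Your high-level plan coincides with the paper's: a reduction from \cl{} with a vertex-layer and an edge-layer encoding incidence, a delay gadget giving the firefighter roughly $k$ moves before the fire reaches the vertex-layer, and a threshold of the form $k+\binom{k}{2}$ plus bookkeeping. However, the step that you yourself identify as the main obstacle --- forcing $G \setminus X$ to have diameter at most two while keeping $|X|$ polynomial in $k$ --- is resolved incorrectly in your sketch. You propose to place the universal or near-universal ``hub'' vertices \emph{inside} the modulator $X$. Deleting vertices can never decrease the diameter of what remains: if the hubs are in $X$, then $G \setminus X$ is essentially the vertex--edge incidence structure of the original graph $H$, whose diameter is about twice the diameter of $H$ and hence unbounded. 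With your placement, the constructed instance is simply not a legal instance of \skv{} with a $\mathrm{poly}(k)$-sized modulator to diameter-two graphs, so the reduction does not go through.

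The paper's trick is exactly the opposite placement: a single universal vertex $z$, adjacent to every vertex of the vertex-layer and of the edge-layer, is put \emph{outside} the modulator. The modulator is then $\{s\}$ together with the delay gadget (size $k(k+1)+1$), and $G \setminus X$ has diameter two \emph{because of} $z$. This placement changes the game dynamics, and the correctness argument must account for it: once any undefended layer vertex burns, $z$ burns, and then everything burns; hence every strategy meeting the threshold must spend one move defending $z$, and the paper accordingly sets $k' = k + \binom{k}{2} + 2$ (the $k$ clique vertices, the $\binom{k}{2}$ edge vertices, $z$ itself, and one additional defended edge-vertex). Your threshold $k+\binom{k}{2}$ and your reverse-direction argument ignore this forced defence. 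Separately, your timing argument needs to be made concrete: Lemma~\ref{lem-optlength1} only caps the number of defences, it does not by itself prevent ``cheating'' inside the delay gadget. The paper handles this by making the gadget $k$ layers of $k+1$ pairwise-twin vertices, so that defending any gadget vertex is provably useless (a burning twin with the same neighborhood always remains); this is what pins all useful defences to the vertex-layer and makes the extraction of a $k$-clique in the reverse direction go through.
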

\begin{proof}

\begin{figure}[h]
 \centering
\[\begin{tikzpicture}
	\vertex (1) at (0,0) [label=left:$1$]{};
	\vertex (2) at (0,1) [label=left:$2$]{};
	%\vertex (3) at (0,2) [label=above:$3'$]{};
	\vertex (3) at (0.5,2) [label=above:$3$]{};
	\vertex (4) at (1,1) [label=right:$4$]{};
	\vertex (5) at (1,0) [label=right:$5$]{};
	\path 
		(1) edge (2)
		%(2) edge (3)
		%(3) edge (4)
		%(3) edge (5)
		(2) edge (3)
		(3) edge (4)
		(4) edge (5)
		(4) edge (2)
	;
\hspace{0.5cm}
	\node[] at (4,4) {$D$};
	\draw (0+2.5,-0.5) -- (3.5+2,-0.5) -- (3.5+2,3.75) -- (0+2.5,3.75) -- (0+2.5,-0.5);
	\vertex (11) at (0+3,3) [label=above:$d_{11}$]{}; \vertex (21) at (1+3,3) [label=above:$d_{21}$]{};\vertex (31) at (2+3,3) [label=above:$d_{31}$]{};
	\vertex (12) at (0+3,2) [label=above:$d_{12}$]{}; \vertex (22) at (1+3,2) [label=above:$d_{22}$]{};\vertex (32) at (2+3,2) [label=above:$d_{32}$]{};
	\vertex (s) at (0+2,1.5) [label=above:$s$]{};
	\vertex (13) at (0+3,1) [label=above:$d_{13}$]{}; \vertex (23) at (1+3,1) [label=above:$d_{23}$]{};\vertex (33) at (2+3,1) [label=above:$d_{33}$]{};
	\vertex (14) at (0+3,0) [label=below:$d_{14}$]{};\vertex (24) at (1+3,0) [label=below:$d_{24}$]{};\vertex (34) at (2+3,0) [label=below:$d_{34}$]{};
	%\vertex (15) at (0+3,-1) [label=below:$d_{15}$]{};\vertex (25) at (1+3,-1) [label=below:$d_{25}$]{};\vertex (35) at (2+3,-1) [label=below:$d_{35}$]{};
	
	\node[] at (5+1.9,4) {$V$};
	\draw (5+2,1.5) ellipse (0.8cm and 2cm);
	\vertex (c_1) at (5+2,3) [label=above:$c_1$]{};
	\vertex (c_2) at (5+2,2.4) [label=below:$c_2$]{};
	\vertex (c_3) at (5+2,1.7) [label=below:$c_3$]{};
	\vertex (c_4) at (5+2,0.9) [label=below:$c_4$]{};
	\vertex (c_5) at (5+2,0.2) [label=below:$c_5$]{};
	\vertex (u) at (5+1.5,-0.9) [label=below:$z$]{};
	\node[] at (5+4.25,4.1) { $E$};
	\draw (5+4.3,1.6) ellipse (0.8cm and 1.8cm);
	\vertex (i_12) at (5+4,2.8) [label=right:$i_{12}$]{};
	\vertex (i_23) at (5+4,2.2) [label=right:$i_{23}$]{};
	\vertex (i_24) at (5+4,1.6) [label=right:$i_{24}$]{};
	\vertex (i_34) at (5+4,1) [label=right:$i_{34}$]{};
	%\vertex (i_35) at (5+4,1.5) [label=right:$i_{35}$]{};
	\vertex (i_45) at (5+4,0.4) [label=right:$i_{45}$]{};
	%\vertex (i_52) at (5+4,0.5) [label=right:$i_{52}$]{};
	%\vertex (i_56) at (5+4,0) [label=right:$i_{56}$]{};
	%\vertex (c_6) at (5+4,0) [label=left:$c_6$]{};
	\draw [thick](6.55,-0.9) -- (9,-0.05);
	\draw [thick](6.55,-0.9) -- (6.8,-0.4);
		\path 
		(s) edge (11)(s) edge (12)(s) edge (13)(s) edge (14)
		(11) edge (21)		(11) edge (22)		(11) edge (23)		(11) edge (24)		(12) edge (21)
		(12) edge (22)		(12) edge (23)		(12) edge (24)		(13) edge (21)		(13) edge (22)
		(13) edge (23)		(13) edge (24)		(14) edge (21)		(14) edge (22)		(14) edge (23)
		(14) edge (24)		(21) edge (31)		(21) edge (32)		(21) edge (33)		(21) edge (34)
		(22) edge (31)		(22) edge (32)		(22) edge (33)		(22) edge (34)		(23) edge (31)
		(23) edge (32)		(23) edge (33)		(23) edge (34)		(24) edge (31)		(24) edge (32)
		(24) edge (33)		(24) edge (34)
		(31) edge (c_1)(31) edge (c_2)(31) edge (c_3)(31) edge (c_4)(31) edge (c_5)
		(32) edge (c_1)(32) edge (c_2)(32) edge (c_3)(32) edge (c_4)(32) edge (c_5)
		(33) edge (c_1)(33) edge (c_2)(33) edge (c_3)(33) edge (c_4)(33) edge (c_5)
		(34) edge (c_1)(34) edge (c_2)(34) edge (c_3)(34) edge (c_4)(34) edge (c_5)
		(u) edge (31) (u) edge (32) (u) edge (33) (u) edge (34) 
		(c_1) edge (i_12) (c_2) edge (i_12) (c_2) edge (i_23) (c_3) edge (i_23) (c_2) edge (i_24) (c_4) edge (i_24)
		(c_3) edge (i_34) (c_4) edge (i_34) (c_4) edge (i_45) (c_5) edge (i_45)
		%(u) edge (i_12) (u) edge (i_23) (u) edge (i_24) (u) edge (i_34) (u) edge (i_45)
		%(u) edge (c_1) (u) edge (c_2) (u) edge (c_3) (u) edge (c_4) (u) edge (c_5)

	;
\end{tikzpicture}\]
\caption{ \hspace{0.1cm}Graph $G$ (left) and Distance to diameter two graph $G'$ (right). The vertex $z$ is adjacent to all vertices in $V \cup E$, denoted with thick edges.}
\label{fig-diameter}
\end{figure}
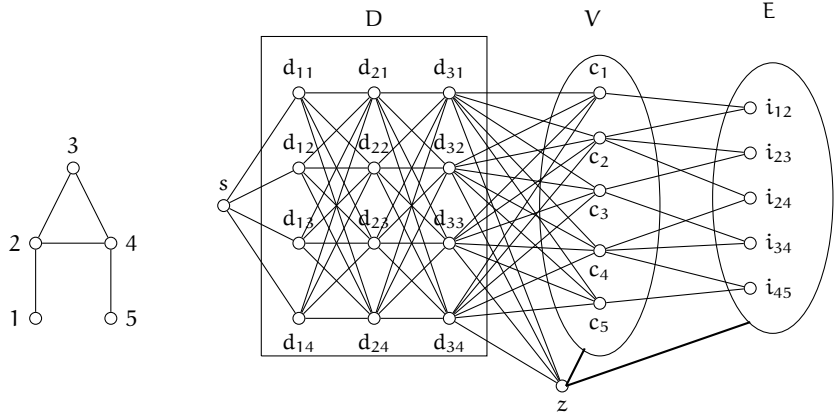

%\begin{proof}
 Let $(G,k)$ be a instance of $k$-clique problem. We construct a graph $G'$ from $G$ as follows (see Fig.~\ref{fig-diameter}).
 For each vertex $v$ in $V(G)$ add vertex $c_v$ in $G'$: this set of vertices is denoted by $V$. 
 %Add an edge between every pair of vertices in $V$. 
 For each edge $(u,v)$ in $G$ add a vertex $i_{uv}$ in $G'$ : this set of vertices is denoted by $E$. Add an edge from vertex $i_{uv}$ in $E$ to both vertices $c_u$ and $c_v$ in $V$ for each $(u,v)$ in $E(G)$.  Add a root vertex $s$, and add vertices $d_{ij}$ (set of vertices denoted by $D$) for $1\leq i \leq k$ and $1\leq j \leq k+1$.
 Add edges between $d_{ij}$ to $d_{i'j'}$ for all $i,j,j'$ and $i'=i+1$. Connect $s$ to $d_{1,j}$ for all $j$ and connect $d_{k,j}$ to each 
 vertex in $V$ for all $j$. Add a vertex $z$ adjacent to all vertices of $V \cup E$. The vertex set of $G'$ is $V(G')=\{s,z\}\cup D \cup V\cup E$. 
 %It is easy to see that $G' \setminus D \cup \{s\}$ is a split graph.  
 It is easy to see that $G'$ is $k(k+1)+1$ distance to diameter two graphs.
 Set $k'=k+{k \choose 2}+2$ for saving vertices in $G'$.
 
 We prove that \skv{} on $(G',s,k')$ is a yes instance if and only if $k$-clique on $(G,k)$ is a yes instance.
 Suppose that there is a clique $K$ of size $k$ in $G$. Let $S$ be a valid strategy described as follows: At each time step $t=1,\cdots,k$  
 $S$ defends the vertices $c_v$ for all $v \in K$, at time step $k+1$, $S$ defends the vertex $z$ and at time step $k+2$, $S$ defends a vertex $i_{uv}$, for some edge $(u,v) \not \in E(G[K])$.
 So the strategy $S$ saves $k+1+{k \choose 2}+1$ vertices in $G$: $k$ vertices in $V$ and ${k \choose 2}+1$ vertices in $E$.
 
 For other direction, suppose that $S=\{p_1,p_2,\cdots,p_l\}$ is a valid strategy that saves at least $k'$ vertices in $G'$.
 It is easy to see that at any time step defending a vertex $d_{ij}$ for any $i,j$ is not helpful, because there is at least one burning vertex $d_{i,j'}$ with same neighborhood as $d_{ij}$.
 %if you protect $d_{ij}$ for some $i,j$ at time step $t$ then all the vertices $d_{i+1,j}$ will be burned at time $t+1$. We can not save any vertex. 
 Further defending vertices in $E$ does not save any vertices. Since the vertex $z$ is universal, therefore it needs to be defended. 
 %Hence we can see that no vertex form $D \cup I$ is protected by the strategy $S$.
So the vertices in $S \cap V$ are responsible to save at least ${k \choose 2}$ many vertices, which is possible only if the vertices defended in $V$ induces $k$-clique in~$G$. \qed
 %Therefore $G$ has a $k$-clique. 
 %Since the fire takes $k+1$ steps to reach $I$ from starting vertex we can protect one vertex in $I$ at time step $k+1$. 
 %These $k$ vertices in $C$ are responsible to save ${k \choose 2}$ vertices in $I$ 
%\end{proof}

%We can obtain the hardness of \skv{} by a reduction from the $k$-clique problem as well, in fact by making minor changes to the reduction used to prove Theorem~\ref{th-diameter-two}.

\end{proof}

We can obtain the hardness of \skv{} by a reduction from the $k$-clique problem as well, in fact by making minor changes to the reduction used to prove Theorem~\ref{th-diameter-two}.

\begin{corollary}\label{coro-split}
\skv{} is $W[1]$-hard parameterized by $(k +l)$, where $l$ is distance to split graphs.
\end{corollary}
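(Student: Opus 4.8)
The plan is to recycle the reduction from $k$-clique built in the proof of Theorem~\ref{th-diameter-two} and to change only the adjacencies inside the vertex set $V$. Recall that there the vertices lying outside the modulator $D \cup \{s\}$ form the set $V \cup E \cup \{z\}$, where $V$ and $E$ are independent, $z$ is adjacent to every vertex of $V \cup E$, and each edge-vertex $i_{uv} \in E$ has neighbourhood exactly $\{c_u, c_v, z\}$ (see Fig.~\ref{fig-diameter}). This induced subgraph is not split in general --- it contains an induced $2K_2$ whenever $G$ has two independent edges --- so the first thing I would do is add all edges $c_u c_v$ for $u,v \in V(G)$, turning $V$ into a clique; call the resulting graph $G''$. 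After this change $V \cup \{z\}$ is a clique and $E$ is an independent set each of whose vertices has all of its neighbours inside $V \cup \{z\}$, so $G'' \setminus (D \cup \{s\})$ is a split graph. Consequently the distance of $G''$ to the class of split graphs is at most $|D \cup \{s\}| = k(k+1)+1$, a function of $k$ alone, and I would keep the demand $k'$ and the clique size $k$ exactly as in Theorem~\ref{th-diameter-two}; both the modulator size and $k'$ are then bounded by a function of $k$, so the map $G \mapsto (G'', s, k')$ is a legitimate parameterized reduction for the combined parameter.

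The substance of the argument is to verify that making $V$ a clique changes neither the fire dynamics nor the set of vertices that a given strategy saves. First I would note that the new edges create no shortcut out of $s$: since $N(s) \subseteq D$ and $D$ enters $V$ only through its last layer, every $c_v$ is still at distance $k+1$ from $s$ and every $i_{uv}$ at distance $k+2$, so the timing exploited in the proof of Theorem~\ref{th-diameter-two} is intact. The key observation is that all undefended vertices of $V$ ignite simultaneously, in the single step in which the fire crosses from the last layer of $D$ into $V$. Hence a new edge $c_u c_v$ can only ever join two vertices that are either both burning already in that same step or are defended, and hence permanently safe; such an edge therefore can neither speed up the fire nor ignite a vertex that would otherwise have survived. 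Dually, since the neighbourhood of each $i_{uv}$ remains $\{c_u, c_v, z\}$, an edge-vertex is saved in $G''$ by a strategy $S$ precisely when it is saved in $G'$ by $S$.

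Combining these points, I would conclude that for every firefighting strategy the set --- and hence the number --- of saved vertices is identical in $G'$ and $G''$, because the only difference between the two graphs is the family of intra-$V$ edges, which we have just argued are inert for both the spread of the fire and the protection of edge-vertices. Thus $(G'', s, k')$ is a yes-instance of \skv{} exactly when $(G', s, k')$ is, and by Theorem~\ref{th-diameter-two} this holds exactly when $G$ contains a clique of size $k$, giving the desired \WOH{}ness. I expect the only delicate point to be this inertness claim: one must rule out that the new clique on $V$ lets the firefighter save extra edge-vertices while defending fewer centres, and this is exactly what is prevented by the fact that the neighbourhood of every $i_{uv}$ is untouched by the added edges.
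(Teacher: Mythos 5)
Your proposal is correct, and it shares the paper's core idea --- recycle the Theorem~\ref{th-diameter-two} construction and make $V$ a clique so that the part outside the modulator becomes split --- but the two sets of follow-up modifications differ. The paper, having made $V$ a clique, deletes the universal vertex $z$ entirely (it is no longer needed once $V$ itself is a clique and every $i_{uv}$ has both its remaining neighbors there), shrinks the pipe $D$ to $k-1$ layers of $k$ vertices each, and lowers the demand to $k'=k+\binom{k}{2}+1$; the modulator is then $D\cup\{s\}$ of size $k(k-1)+1$, and the equivalence with $k$-clique is re-argued directly on the new graph. You instead keep $z$, $D$, and $k'$ untouched, change only the intra-$V$ adjacencies, and prove an ``inertness'' claim: since the firefighter can never fully defend a layer of $D$, all undefended vertices of $V$ ignite simultaneously at step $k+1$, so every added edge $c_uc_v$ only ever joins two vertices that are both burning or both permanently defended, and the burned/saved sets of every strategy coincide in $G'$ and $G''$. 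This lets you inherit correctness of the backward direction from Theorem~\ref{th-diameter-two} as a black box rather than re-proving it, which is arguably the cleaner argument (the paper's version is quite terse on this point); the paper's version buys a marginally smaller modulator, which is immaterial for \WOH{}ness. One small point you gloss over: the claim that saved sets coincide presupposes that a sequence valid in one graph is valid in the other, but this follows from the same induction (the burning sets at each time step are identical under any fixed defense sequence), so the gap is cosmetic.
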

\begin{proof}
The proof is similar to Theorem~\ref{th-diameter-two}, except the following minor changes in the construction of graph $G'$.
Add an edge between every pair of vertices in $V$ and remove the universal vertex $z$. 
Add vertices $d_{ij}$ for $1\leq i \leq k-1$ and $1\leq j \leq k$.
Add edges between $d_{ij}$ to $d_{i'j'}$ for all $i,j,j'$ and $i'=i+1$. Connect $s$ to $d_{1,j}$ for all $j$ and connect $d_{k-1,j}$ to each 
vertex in $V$ for all $j$. Set $k'=k+{k \choose 2}+1$ for saving vertices in $G'$.
The graph $G'$ is $k(k-1)+1$ distance to a split graph.
It is easy to see that \skv{} on $(G',s,k')$ is a yes instance if and only if $k$-clique on $(G,k)$ is a yes instance. \qed
\end{proof}

\section{Kernelization Complexity}\label{sec-kernel}

\longversion{\subsection{Parameterization by Distance to Clique}}
In this section, we give a polynomial kernel for \skv{} when parameterized by distance to clique, as summarized in the following theorem. \shortversion{The kernelization is obtained by a combination of several observations about the behavior of the vertices in the clique, and adding a small auxiliary graph that can mimic the role of most of the clique that is not adjacent to the modulator. }

\begin{theorem}%\shortversion{[$\star$]}
\skv{} admits a polynomial kernel of size at most $O(l^2)$,  where $l$ is the distance to clique.
\end{theorem}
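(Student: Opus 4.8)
The plan is to kernelize by shrinking the clique $C := G \setminus X$ down to $O(l^2)$ vertices while leaving the modulator $X$ (of size $l$) untouched. First I would normalize the instance: if $s \notin X$ then $C \setminus \{s\}$ is still a clique, so I move $s$ into the modulator and work with a modulator of size at most $l+1$; hence I assume $s \in X$. Two structural observations then drive everything. Since any induced path from $s$ can meet the clique in at most two vertices (three pairwise-adjacent vertices would create a chord), the longest induced path from $s$ has length at most $l+2$, so by Lemma~\ref{lem-optlength1} every optimal strategy defends at most $d := l+2 = O(l)$ vertices. The second, and key, observation is that $C$ is \emph{all-or-nothing}: the moment the fire reaches any vertex of $C$, all undefended clique vertices burn at the next step. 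Hence every strategy is of one of two kinds --- \emph{type~A} strategies, which keep the fire out of $C$ and thereby save \emph{all} of $C$, and \emph{type~B} strategies, which let the fire enter and save only the (at most $d$) clique vertices defended before ignition. Using Corollary~\ref{coro-burned}, a type~B strategy saves at most $d + l = O(l)$ vertices in total.

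Next I would split $C$ into the \emph{border} $B$ (clique vertices with a neighbour in $X$) and the \emph{interior} $C_0 = C \setminus B$, and group $B$ by the type $N(b) \cap X$. All interior vertices are interchangeable, and within each border type the vertices are interchangeable, which justifies two reduction rules. \textbf{(Border rule.)} For each $x \in X$ keep at most $2d+2$ border vertices adjacent to $x$ and delete the rest; the union over the $l$ modulator vertices retains at most $l(2d+2) = O(l^2)$ border vertices. The justification is that if some $x$ has more than $d+1$ undefended border neighbours then the firefighter cannot protect all of them, so once $x$ burns the clique must ignite through $x$; keeping $2d+2$ neighbours per $x$ preserves both this ignition behaviour and enough representatives to realize any defending or saving choice (recall only $\le d$ clique vertices are ever defended). \textbf{(Interior rule.)} I keep $O(l)$ genuine interior vertices as defendable representatives, delete the remaining interior vertices, attach a small auxiliary graph $H$ of $O(l)$ vertices to the clique so that $H$ is saved exactly when the fire is kept out of $C$ and burns as soon as the fire enters $C$, and decrease the demand by the number of deleted clique vertices. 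The role of $H$ together with the shift is to \emph{mimic} the count contributed by the deleted clique vertices: in a type~A run the deleted vertices and $H$ are all saved, while in a type~B run they are all lost. After both rules the instance has $|X| + O(l^2) + O(l) = O(l^2)$ vertices, and all rules run in polynomial time.

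The hard part will be the equivalence proof, and its delicate point is the counting. Decreasing the demand by the number of deleted clique vertices is plainly correct on type~A strategies, where both the attainable count and the demand drop by exactly that amount. The danger is a type~B strategy: there the deleted vertices contribute nothing to the count (they all burn), so a blindly lowered demand could turn a no-instance into a yes-instance. I would neutralize this using the all-or-nothing dichotomy and the ceiling from the first paragraph: since type~B strategies save at most $d + l$ vertices, the bulk of the clique can only ever be saved in a type~A run, and $H$ is constructed to be unsaveable whenever the fire enters $C$. Concretely, I split on whether the fire can be kept out of $C$ at all. If containment is infeasible then type~A is vacuous, the maximum savable value is at most $d + l$, and I would reduce with \emph{no} shift (and, when the demand exceeds $d + l$, simply output a trivial no-instance). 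If containment is feasible then the type~A value dominates and exceeds $d + l$, so for demands below that threshold the instance is already a yes-instance, while for larger demands the shift is safe because the residual type~B value stays strictly below the shifted demand. Verifying that containment feasibility can be tested in polynomial time --- which here reduces, via the clique structure, to deciding whether a valid firefighting schedule can separate $s$ from $B$ using cut vertices drawn from $X$ --- is the main technical obstacle I would need to settle. Once the two regimes are reconciled, a standard check via Lemma~\ref{lem-stcheck} confirms that the saved counts are preserved, yielding the claimed $O(l^2)$ kernel.
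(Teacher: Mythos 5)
Your reduction rules themselves are close in spirit to the paper's (the paper keeps $J$, the clique-neighbours of modulator vertices having at most $l+1$ clique-neighbours, and replaces the rest of $C$ by two small cliques $K$ and $L$; your per-vertex border quota plus interior representatives plus the auxiliary graph $H$ plays the same role). The genuine gap is in how you certify correctness: your argument branches on whether ``containment is feasible,'' i.e.\ whether some valid strategy keeps the fire out of $C$, and the kernelization algorithm must decide which branch it is in. A kernelization must run in time polynomial in the input size with no $f(l)$ factor, and deciding containment is itself a firefighting question --- the natural algorithms enumerate defense schedules of length up to roughly $l$, giving $n^{O(l)}$ time, and nothing in your sketch (nor in the paper) gives a polynomial-time test. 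You flag this yourself as ``the main technical obstacle I would need to settle,'' but it is not a loose end; it is the crux, and assuming it amounts to assuming the problem away. The paper never decides containment: it applies one fixed reduction and then sets the new demand $k'$ purely as a function of the known quantities $k$, $|C|$, $|J|$, $|K|$, $|L|$, in three regimes ($k'=k$ when $k\leq 2l$; $k'=2l+1$ when $2l+1\leq k\leq |C|-1$; $k'=k-|C|+|J|+|K|+|L|$ when $k\geq |C|$). The enabling fact is its Claim~1 --- any strategy saving at least $2l+1$ vertices must save the \emph{entire} clique, since otherwise it saves at most $l-1$ modulator vertices plus at most $l+1$ defended clique vertices --- which collapses the intermediate regime to the single question ``is the whole clique savable,'' a question that transfers between $G$ and the reduced graph by the burning-equivalence lemma and hence never has to be answered by the algorithm.

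There is also a quantitative error in your shift, independent of the feasibility issue: you delete $\Delta$ clique vertices, add $|H|$ new ones, but lower the demand only by $\Delta$. In the backward direction this lets the auxiliary vertices inflate the count. Concretely, suppose containment is feasible but the best type-A strategy saves $|C|+m$ vertices with $m = k-|C|-1$, so the original instance is a no-instance; the corresponding reduced strategy saves $(|C|-\Delta)+|H|+m = k-\Delta+|H|-1 \geq k-\Delta$, so the reduced instance is a yes-instance. The shift must be ``deleted minus added,'' i.e.\ $k' = k-\Delta+|H|$, exactly as in the paper's setting $k' = k-|C|+|J|+|K|+|L|$. This slip is easily repaired, but combined with the first point it means your proposal, as written, does not yet constitute a correct polynomial-time kernelization.
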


Let $(G,s,k)$ be an instance of \skv{} and $X \subseteq V(G)$ of size $l$ such that $G \setminus X=C$ is a clique. 
With out loss of generality we assume that $s \in X$, otherwise define $X'=X \cup \{s\}$ such that $G \setminus X'$ is a clique with size of modulator $l+1$.
Since the length of longest induced
path in $G$ is at most $l+1$, using Lemma~\ref{lem-optlength1} we get the following Corollary.
\begin{corollary}\label{coro-length-clique}
 Given an instance of \ff{}, then any optimal strategy can defend at most $l+1$ vertices, where $l$ is size of the clique modulator $X$. 
% valid strategy can defend at most $l+1$ vertices,  
\end{corollary}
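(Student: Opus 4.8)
The plan is to derive this as an immediate consequence of Lemma~\ref{lem-optlength1}, so the only real work is to bound the length of a longest induced path in $G$ starting from $s$. Recall the setup: $X \subseteq V(G)$ has size $l$, $G \setminus X = C$ is a clique, and without loss of generality $s \in X$. I would first make the structural observation that any induced path in $G$ can contain at most two vertices from the clique $C$. Indeed, if an induced path contained three vertices of $C$, those three vertices would be pairwise adjacent (as $C$ is a clique) and hence induce a triangle rather than a path, contradicting the fact that the path is induced.

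Next I would count the vertices available along an induced path. Since $|X| = l$ and any induced path uses at most two vertices of $C$, every induced path in $G$ has at most $l + 2$ vertices, and therefore length (measured in edges) at most $l + 1$. In particular this bound holds for the longest induced path starting from $s$.

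Finally I would invoke Lemma~\ref{lem-optlength1}, which states that any optimal strategy for the \ff{} problem can defend at most as many vertices as the length of a longest induced path starting from $s$. Plugging in the bound of $l+1$ established above yields that any optimal strategy defends at most $l+1$ vertices, which is exactly the claim.

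There is essentially no serious obstacle here, as the statement is a direct specialization of Lemma~\ref{lem-optlength1}. The only point requiring a little care is the bookkeeping between vertices and edges in the definition of path length, so that the bound of $l+2$ vertices correctly translates into a length bound of $l+1$; this must be kept consistent with the convention used in the proof of Lemma~\ref{lem-optlength1}.
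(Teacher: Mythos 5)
Your proof is correct and takes essentially the same approach as the paper: the paper also derives this corollary by observing that the longest induced path in $G$ has length at most $l+1$ (since an induced path can contain at most two vertices of the clique $C = G \setminus X$ and at most $l$ vertices of $X$) and then invoking Lemma~\ref{lem-optlength1}. The only difference is that the paper asserts the induced-path bound without proof, whereas you spell out the (easy) justification, with the vertex-versus-edge bookkeeping handled consistently with the lemma's convention.
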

Let $X_L:=\{x \in X: |N(x) \cap C| \leq l+1\}$, $X_H=X \setminus X_L$. Let $J:= \{y \in C ~|~ \exists x \in X_L, y \in N(x)\}$, $|J| \leq l(l+1)$.
%Our reduction rules are based on the following observation. 
Our kernelization algorithm is based on the following observation. 
We show that replacing the clique $G[C \setminus J]$ with another clique of small size does
not affect the solution.
%If a strategy $S$ defends some vertices of $C \setminus J$ then replacing the vertices of $C \setminus J$ with few new vertices does not affect the solution.

\begin{lemma}
 Let $S$ be a valid strategy containing a vertex $v \in C \setminus J$, then all the vertices of $N(v) \setminus S$ are burned.
\end{lemma}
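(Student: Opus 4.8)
The plan is to reduce the statement to a pure reachability question and then exploit the clique structure together with the high-degree property of the modulator vertices adjacent to $v$. By Lemma~\ref{lem-stcheck} (and Corollary~\ref{coro-burned}), for any valid strategy $S$ the set of burned vertices is exactly the set of vertices reachable from $s$ in $G \setminus S$. So it suffices to show that every $u \in N(v) \setminus S$ is reachable from $s$ in $G \setminus S$. Since $v \in S$, we may write $N(v) \setminus S = (C \setminus S) \cup \big((N(v) \cap X) \setminus S\big)$, and because $v \in C \setminus J$ every modulator neighbour of $v$ lies in $X_H$, so $N(v) \cap X \subseteq X_H$. I will also use the bound $|S| \le l+1$, which holds for any valid strategy: as $G \setminus X$ is a clique, an induced path from $s$ uses at most two clique vertices, so the longest induced path from $s$ has at most $l+2$ vertices, and the induced-path argument of Lemma~\ref{lem-optlength1} (cf.\ Corollary~\ref{coro-length-clique}) then gives $|S| \le l+1$.

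The core step is to produce \emph{one} burned clique vertex. Since $v$ is defended by the valid strategy $S$, its defence is effective, i.e.\ the fire would otherwise reach $v$; concretely there is an $s$--$v$ path $P$ avoiding $S \setminus \{v\}$. Let $w$ be the neighbour of $v$ immediately preceding it on $P$. Then $w \neq v$ and $w \notin S$, and the initial segment of $P$ up to $w$ witnesses that $w$ is reachable from $s$ in $G \setminus S$; hence $w$ is burned. If $w \in C$, we already have a burned clique vertex. If instead $w \in X$, then $w \in X_H$ (as $w$ is a modulator neighbour of $v \in C \setminus J$), so $|N(w) \cap C| > l+1 \ge |S|$; therefore $w$ has a clique neighbour $y \notin S$, and since $w$ is reachable and $y \notin S$, the edge $wy$ shows $y$ is burned. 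Either way, some vertex of $C \setminus S$ is burned.

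With one burned clique vertex in hand, I propagate. The set $C \setminus S$ induces a clique and is therefore connected, so once one of its vertices is reachable from $s$ in $G \setminus S$, all of $C \setminus S$ is; this handles the clique part of $N(v) \setminus S$. For a surviving modulator neighbour $x \in (N(v) \cap X) \setminus S$, we again use $x \in X_H$: from $|N(x) \cap C| > l+1 \ge |S|$ we get a clique neighbour $z \notin S$, which is burned by the previous sentence, and since $x \notin S$ is adjacent to $z$, $x$ is reachable from $s$ in $G \setminus S$, hence burned. Combining the two cases shows every vertex of $N(v) \setminus S$ is burned.

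The step I expect to be the main obstacle is the second paragraph, namely arguing that the fire genuinely reaches the clique (that some vertex of $C \setminus S$ burns). This is exactly the place where the role of $v$ as a \emph{defended} vertex in a valid strategy must be used: if one were allowed to spend a defence on an unthreatened clique vertex while the fire stays entirely inside the modulator, the conclusion could fail. I would make the claim that $v$'s defence is effective rigorous either by invoking the minimality of $S$ (an ineffective defence of $v$ could be dropped) or, more directly, by the induced-path characterisation of Lemma~\ref{lem-optlength1} applied to $v$, which yields the required $s$--$v$ path avoiding $S \setminus \{v\}$. Once that path is secured, the remaining reasoning is routine counting against the bound $|S| \le l+1$ and the connectivity of $C \setminus S$.
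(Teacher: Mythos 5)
Your proof follows essentially the same route as the paper's: both extract a burned neighbour of $v$ from the validity of $S$ (the paper directly asserts a burned induced $s$--$v$ path), split into the cases where that neighbour lies in the clique or in the modulator (where it must satisfy $|N(\cdot)\cap C| > l+1$ since $v \in C \setminus J$), use the clique structure to burn all of $C \setminus S$, and finally use $|N(x)\cap C| > l+1 \ge |S|$ to conclude that every undefended modulator neighbour of $v$ burns. The differences are purely presentational: you route the argument through the reachability characterisation of burned vertices and explicitly flag the ``effective defence'' step that the paper asserts without comment.
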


\begin{proof}
 Since $S$ is a valid strategy, there is an induced path $P$ from $s$ to $v$ such that all vertices on $P$ are burned, except $v$.
 Let $X_v:=N(v) \cap X$ and for every $x \in X_v \setminus S$, we have $|N(x)\cap C|>l+1$: suppose there is $x \in N(v) \cap X$ such that 
 $|N(x)\cap C|\leq l+1$ then $v \in J$, contradiction to $v \in C \setminus J$.
  
 Let $u$ be a burned neighbor of $v$ on $P$. If $u \in C$ then $C \setminus S$ is burned and for every $x \in X_v \setminus S$, 
 we have $|N(x)\cap C|>l+1$. From Corollary~\ref{coro-length-clique} we can only defend at most $l+1$ vertices, therefore all 
 vertices of $X_v \setminus S$ are burned.
 
 If $u \in X_v\setminus S$ then $|N(u)\cap C|>l+1$, therefore all vertices $C \setminus S$ are burned. From the first case we can see that 
all vertices of $X_v \setminus S$ are also burned. \qed
 \end{proof}
{\bf Reduction rules}: 

\begin{enumerate}
%\item Delete all vertices of $C$ which do not have any neighbors in $X$, except $l$. 
\item Delete vertices of $C \setminus J$ from $G$.
  Add a clique $K$ of size $l+2$ and make each vertex of $K$ adjacent to all vertices in 
$X_H\cup J$.

\item Add another clique $L$ of size $min \{l+1, |C \setminus J| \}$ and for each vertex $u \in L$, add edges between $u$ and $J \cup K$.  
 \end{enumerate}
% %%%%%%%%%%%%%%%%%%%%%%%%%%%%%%%%%%%%%%%%%%%%%%%%%%%%%%%%%%
% \begin{figure}
% \centering
% \includegraphics[scale=0.9,trim={3cm 21cm 7cm 3cm},clip]{kernel-1.pdf}
% \caption{ \hspace{0.1cm}. Construction of graph $H$ from $G$. Thick edges indicate
% that all edges are present, whereas thin edges indicate the existence of some edges.}
% \label{fig-kernel}
% \end{figure}
% %%%%%%%%%%%%%%%%%%%%%%%%%%%%%%%%%%%%%%%%%%%
%  
 
 Let $H$ be the graph obtained after applying above reduction rules. It is easy to see that
$X \subseteq V(H)$ such that $H \setminus X$ is a clique.
The size of the reduced instance $H$ is at most $l^2+4l+3$ and the reduction can be done in polynomial time. 

Let $C=G \setminus X$ and $C'=H\setminus X$. We may assume that $|C \setminus J|> 2l+3$; otherwise, trivially we get a kernel of 
size at most $l^2+4l+3$.
\begin{remark}\label{rem-strategy}
 Let $G$ be a graph and $S$ be a
valid strategy. 
If $S$ defends a subset $S_1$ of vertices in $C \setminus J$, then defending any subset $S_2$ of vertices in $L$ 
instead of $S_1$, with $|S_1|=|S_2|$ is also a valid strategy $S'$ for $H$.

Conversely let $S'$ be a strategy on $H$. If $S'$ defends a subset $S_1$ of vertices in $K \cup L$, 
then defending any subset $S_2$ of vertices in $C \setminus J$ 
instead of $S_1$, with $|S_1|=|S_2|$ is also a valid strategy $S$ for $G$.
\end{remark}
\begin{lemma}\label{lem-reduction}
 Let $G$ and $H$ be graphs as defined above and $S$, $S'$ be corresponding valid strategies as defined in the above remark. At least one vertex of $C$ is burned in $G$ by strategy $S$ iff at least one vertex of $C'$ is burned in $H$ by strategy $S'$.
\end{lemma}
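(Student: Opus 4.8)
The plan is to recast ``a vertex of the clique burns'' as a pure reachability statement and then argue that this statement is insensitive to the reduction. By Lemma~\ref{lem-stcheck}, for any valid strategy the set of burned vertices is exactly the set of vertices reachable from $s$ once the defended vertices are removed. Hence ``at least one vertex of $C$ is burned in $G$ by $S$'' is equivalent to ``some vertex of $C\setminus S$ is reachable from $s$ in $G\setminus S$'', and likewise for $H$, $C'$, $S'$. Since $|C\setminus J|>2l+3$ and any strategy defends at most $l+1$ vertices (Corollary~\ref{coro-length-clique}), neither $C\setminus S$ nor $C'\setminus S'$ is empty, so these events genuinely ask whether the fire ever enters the clique.

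First I would introduce the set $R_X$ of vertices of $X\setminus S$ reachable from $s$ using only vertices of $X\setminus S$, and prove the gateway characterization: a vertex of $C\setminus S$ is reachable from $s$ in $G\setminus S$ if and only if some $x\in R_X$ has a neighbour in $C\setminus S$. For the forward direction one walks along a witnessing path and takes the last modulator vertex before the path first enters the clique; the converse is immediate. The crucial point is that $R_X$ is literally the same set in $G$ and in $H$: the correspondence of the Remark keeps $S\cap X=S'\cap X$, the vertex set $X$ is untouched, and the reduction changes no edge inside $X$. Thus both burning questions reduce to comparing, across $G$ and $H$, whether some vertex of the common set $R_X$ has an undefended clique neighbour.

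The argument then splits on whether $R_X$ meets $X_H$. If some $x\in R_X\cap X_H$ exists, then in $G$ the vertex $x$ has more than $l+1$ neighbours in $C$ while $|S|\le l+1$, so it has an undefended clique neighbour; in $H$ the vertex $x$ is adjacent to all of $K$, and since $|K|=l+2>l+1\ge|S'|$ it again has an undefended clique neighbour. Hence both events hold and the equivalence is trivially satisfied. If $R_X\cap X_H=\emptyset$, then every $x\in R_X$ lies in $X_L$, and by the definition of $J$ all clique neighbours of $X_L$-vertices lie in $J$; in $H$ the gadget cliques $K$ and $L$ are attached only to $X_H\cup J$ and to $J\cup K$ respectively, so an $X_L$-vertex still sees exactly its original neighbours in $J$. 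Because the reduction preserves all $X$--$J$ edges and the Remark keeps $S\cap J=S'\cap J$, the condition ``some $x\in R_X$ has a neighbour in $J\setminus S$'' is identical in $G$ and $H$, which yields the equivalence.

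The main obstacle, and the part needing the most care, is confirming that the gadget cliques $K$ and $L$ open no unintended route for the fire to reach the clique from the modulator: one must check that $L$ is adjacent to no vertex of $X$ and that $K$ is adjacent only to $X_H$ (never to $X_L$), so that in the case $R_X\cap X_H=\emptyset$ the reachability condition stays confined to the preserved $X_L$--$J$ adjacencies. The remaining bookkeeping---that strategies defend at most $l+1$ vertices, so that the large neighbourhood of an $X_H$-vertex inside a size-$(l{+}2)$ clique can never be fully blocked, and that the Remark guarantees $X\setminus S=X\setminus S'$ together with $J\setminus S=J\setminus S'$---is routine.
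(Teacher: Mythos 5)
Your proof is correct, but it is organized around a different key lemma than the paper's, and the comparison is instructive. The paper proves the two directions of the equivalence separately, by a case analysis on where the burned clique vertex lies: if it lies in $J$ (untouched by the reduction) the burning is claimed to transfer directly; otherwise it exhibits a burned vertex $x \in X_H$, whose more than $l+1$ neighbours in $C$ (respectively, whose $l+2$ neighbours in $K$) cannot all be defended, so a clique vertex burns in the other graph as well. That argument silently assumes that the burned status of vertices in $X \cup J$ is the same under $S$ in $G$ and under $S'$ in $H$ --- a fact that is not immediate, since burned sets are reachability sets in two \emph{different} graphs. Your proof supplies exactly this missing machinery: by factoring both burning events through the gateway condition ``some vertex of $R_X$ has an undefended clique neighbour,'' where $R_X$ is defined using paths inside $X \setminus S$ only, you obtain an invariant that is manifestly identical in $G$ and $H$ (same modulator, same edges inside $X$, $S \cap X = S' \cap X$ by the Remark), and your case split on whether $R_X$ meets $X_H$ then reuses the same two structural facts the paper relies on: the $X$--$J$ attachments are preserved with $S \cap J = S' \cap J$, and an $X_H$-vertex can never have all its clique neighbours defended, since $|N(x)\cap C| > l+1 \geq |S|$ in $G$ and $|K| = l+2 > |S'|$ in $H$. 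What the paper's route buys is brevity; what yours buys is rigour and symmetry --- one argument yields both directions at once, and the cross-graph transfer the paper merely asserts becomes a tautology about the common set $R_X$. One shared caveat: like the paper, you invoke the bound of $l+1$ defended vertices for the given \emph{valid} strategies, whereas Corollary~\ref{coro-length-clique} is stated for \emph{optimal} ones; this is harmless (the process on these graphs must stop within $l+1$ steps, since fire spreads along induced paths, so any valid strategy has length at most $l+1$), but it merits a sentence in a polished write-up.
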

\begin{proof}
 Let $u \in C'$ is burned by strategy $S'$ in $H$. If $u \in J$, then $u \in C$ will also burned by $S$ in $G$.
 If $u \in K$ and no vertex of $J$ burned by $S'$ in $H$ then, at least one vertex $x \in X_H$ gets burned and $|N(x) \cap C| >l+1$, therefore at least one vertex in 
 $N(x) \cap C$ gets burned by $S$ in $G$. 
 
 Let $u \in C$ is burned by strategy $S$ in $G$. If $u \in J$  then it will be burned by $S'$ in $H$.
 If $u \in C \setminus J$ and no vertex of $J$ burned by $S$ in $G$ then there exists burned vertex $x \in X_H$. Since $|N(x) \cap C'| >l+1$,  at least one vertex of $K$ gets burned. \qed 
\end{proof}
{\bf Claim 1} If a strategy $S$ saves at least $2l+1$ vertices in $G$ then $S$ saves entire clique.

\begin{proof}
 Let $S$ be a valid strategy which saves at least $2l+1$ vertices in $G$. Suppose assume that 
 there exists a vertex $v\in C$ burned by strategy $S$, then no vertex in $C$ is saved except the defended vertices. 
 The strategy $S$ can save at most $l-1$ vertices in $X$ and $l+1$ vertices in $C$, total $S$ saves at most $2l$ vertices which is a contradiction to the fact that $S$ saves at least $2l+1$ vertices.  \qed
\end{proof}
{\bf Claim 2} Saving one undefended vertex in $C$ is equivalent to saving entire clique.
\begin{proof}
Let $v$ be an undefended saved vertex of $C$. If any vertex of $C$ is burned by $S$ then $v$ gets burned contradicting the fact that $v$ is a saved vertex.   \qed
\end{proof}

\begin{lemma}
Let $G$ be a graph and $H$ be the graph obtained after applying reduction rules.
A strategy $S$ saves at least $k$ vertices in $G$ if and only if there exists a strategy $S'$ that saves at least $k'$ vertices in $H$.\end{lemma}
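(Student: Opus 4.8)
The plan is to set $k' := k - \delta$ where $\delta := |C| - |C'| = |C\setminus J| - (2l+3) > 0$ (the last equality using the standing assumption $|C\setminus J|>2l+3$ together with $|C'| = |J| + |K| + |L| = |J| + 2l+3$); thus $\delta$ counts exactly the clique vertices discarded when $C\setminus J$ was replaced by $K\cup L$. The whole argument runs through the strategy-matching of Remark~\ref{rem-strategy}: a strategy $S$ on $G$ is paired with the strategy $S'$ on $H$ that reroutes the defences of $S$ lying in $C\setminus J$ to $L$ while leaving all defences in $X\cup J$ untouched, and symmetrically. First I would record two invariants of this pairing. By Lemma~\ref{lem-reduction}, $S$ burns a vertex of the big clique $C$ if and only if $S'$ burns a vertex of the reduced clique $C'$. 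Moreover the fire that the clique delivers to $X$ is identical on both sides: whenever the clique burns, every vertex of $X_H$ is reached (in $G$ because $|N(x)\cap C|>l+1$ exceeds the at most $l+1$ defences available by Corollary~\ref{coro-length-clique}, and in $H$ because $|K|=l+2$ likewise cannot be fully defended), while the clique-neighbourhood of every $X_L$-vertex lies in the preserved set $J$. Hence $S$ and $S'$ save exactly the same subset of $X$.

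With these invariants in hand the two directions split along whether the clique survives. For the forward direction, if $S$ saves $\ge k$ and preserves all of $C$, then $S'$ saves all of $C'$ and the same part of $X$, so $\mathrm{sav}(S') = \mathrm{sav}(S) - |C| + |C'| \ge k - \delta = k'$; if instead $S$ burns the clique, then $S'$ burns it too and saves the identical total $\mathrm{sav}(S')=\mathrm{sav}(S)\ge k\ge k'$ (as $k'\le k$). The converse is where the work lies. If the given $S'$ on $H$ saves all of $C'$, its partner $S$ saves all of $C$ plus the same $X$-vertices, hence $\mathrm{sav}(S)=\mathrm{sav}(S')+\delta\ge k'+\delta=k$, as required. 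The remaining case, in which $S'$ burns $C'$, is the delicate one: the matching only returns $\mathrm{sav}(S)=\mathrm{sav}(S')\ge k'$, which falls short of $k$ by $\delta$. To rescue it I would appeal to Claim~1 applied to $H$, which caps any clique-burning strategy at $2l$ saved vertices, so this case forces $k'\le 2l$; combined with the dual dichotomy on $G$ furnished by Claims~1 and~2, one argues that the relevant optimum is separated cleanly at the value $2l$ --- whenever the clique of $G$ can be protected at all, a protecting strategy already saves $|C|>2l$ vertices and so dominates every burning strategy, restoring the exact $+\delta$ count.

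The hard part, and the step I expect to be the genuine obstacle, is the case where the clique of $G$ cannot be protected at all. There the optima of $G$ and of $H$ both coincide with the best burning value, which the matching preserves exactly, so the honest offset is $0$ rather than $\delta$, and the single demand $k'=k-\delta$ overshoots: it would wrongly accept $H$ on a burning certificate with $k-\delta\le \mathrm{sav}(S')<k$. Reconciling this is the crux. Protectability of the clique is itself preserved by the matching (by Lemma~\ref{lem-reduction} a protecting strategy on one graph corresponds to one on the other) and, via Claim~2, is equivalent to saving a single undefended clique vertex. I would therefore split the kernelization according to whether the clique is protectable, emitting $(H,s,k-\delta)$ in the protectable branch and the offset-free $(H,s,k)$ otherwise, and verify correctness in each branch from the matching together with the $2l$-bound of Claim~1 that rules out stray burning certificates. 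Establishing this case analysis cleanly --- rather than relying on a naive single threshold --- is the main technical content behind the stated equivalence.
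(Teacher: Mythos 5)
Your strategy-matching invariants and your diagnosis of where the uniform offset $k'=k-\delta$ breaks down are both correct, and your two-branch scheme (offset $\delta$ when the clique of $G$ can be protected, offset $0$ when it cannot) is logically sound. The genuine gap is that the predicate you propose to branch on is not something the kernelization algorithm is entitled to compute. A kernelization must run in polynomial time, but ``the clique of $G$ is protectable'' is --- by exactly the Claim 1 / Claim 2 reasoning you invoke --- equivalent to the existence of a valid strategy saving at least $2l+1$ vertices in $G$, i.e.\ to the instance $(G,s,2l+1)$ of \skv{}, the very problem being kernelized. You give no polynomial-time procedure for deciding this, and none is apparent (the natural algorithm enumerates the at most $l+1$ defence moves of Corollary~\ref{coro-length-clique}, which costs $n^{O(l)}$); deciding it on the reduced graph $H$ instead is both circular, since the transfer of protectability between $G$ and $H$ is part of what the lemma must establish, and too slow, since time exponential in $l$ is acceptable for an \fpt{} algorithm but not for a kernelization. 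So as written, your construction does not yield a polynomial kernel.

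The paper escapes this trap by casing not on a property of the graph but on the value of the demand $k$, which is known to the algorithm: if $k\le 2l$ it sets $k'=k$; if $2l+1\le k\le |C|-1$ it sets $k'=2l+1$; and if $k\ge |C|$ it sets $k'=k-|C|+|J|+|K|+|L|$, which is precisely your $k-\delta$. Claim 1 is what makes this legitimate: once $k\ge 2l+1$, the demand itself forces the whole clique to be saved, so in that regime ``save $\ge k$ in $G$'' collapses to clique protectability (plus, in the top range, saving $k-|C|$ vertices of $X$), and protectability transfers between $G$ and $H$ via the matching of Remark~\ref{rem-strategy} and Lemma~\ref{lem-reduction}; for $k\le 2l$ the saved counts on the two sides agree exactly, so no offset is needed. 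In other words, the threshold $2l+1$ lets the algorithm read off from $k$ alone which of your two branches is operative, without ever deciding protectability. If you replace your branching predicate by this case analysis on $k$, your argument goes through and essentially coincides with the paper's proof.
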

\begin{proof}
Let $S$ be any valid strategy on $G$ then we can find a corresponding strategy $S'$ on $H$ as follows.
%that saves at least $k$ vertices in $G$
If $S$ defends a vertex $v$ in $C \setminus J$ which is deleted in reduction procedure, then by Remark~\ref{rem-strategy} instead of $v$ we can defend 
any other non-defended vertex in $L$. Since $S$ can defend at most $l+1$ vertices in $G$ and $|L|=l+1$, we can always replace vertices of $C \setminus J$ in  $S$ by applying Remark~\ref{rem-strategy}. 
%Using Lemma~\ref{lem-reduction} we can see that the strategy $S'$ saves at least $k'$ vertices in $H$.

Conversely, let $S'$ be any valid strategy on $H$ then we can find its corresponding strategy $S$ on $G$ as follows. 
If $S'$ defends a vertex $v$ in $K \cup L$ which is added in reduction procedure, 
then by Remark~\ref{rem-strategy} instead of $v$ we can defend 
any other non-defended vertex in $C \setminus J$. Since $S'$ can defend at most $l+1$ vertices in $H$ and $|C \setminus J|>2l+3$, 
we can always replace vertices of $K \cup J$ in  $S'$ by applying Remark~\ref{rem-strategy}. 

The parameter $k'$ is defined as follows.
\begin{enumerate}
\item If $k \in [|C|, |C|+l-1]$, then set $k'=k-|C|+|J|+|K|+|L|\leq l^2+4l+3$.
\item If $k \in [2l+1, |C|-1]$, then set $k'=2l+1$.
\item If $k \in [1,2l]$, then set $k'=k$.
\end{enumerate}

Using Lemma~\ref{lem-reduction}, Claims 1 and 2 we can see that the strategy $S$ saves at least $k$ vertices in $G$ if and only if the strategy $S'$ saves at least $k'$ vertices in $H$. \qed

%It is easy to see that from Remark~\ref{rem-k} saving $k$ vertices in $G$ is equivalent to saving $k'$ vertices in $G'$.

\end{proof}

\longversion{\subsection{Parameterization by Distance to Stars}}
\longversion{In this section,}\shortversion{Next,} we show that \skv{} does not admit a polynomial kernel parameterized by distance to stars unless $\NP \subseteq co\NP/poly$. \shortversion{We obtain this result by a polynomial parameter transformation from the problem of finding a maximum clique parameterized by the size of the vertex cover. We defer the details of the following theorem to the full version of the paper.}
\longversion{
First we introduce a notion of polynomial time and parameter transformation, that under 
the assumption $\NP \nsubseteq co\NP/poly$ allows us to show non-existence of polynomial kernels.
\begin{definition}\cite{bodlaender2011kernel}
 Let $P$ and $Q$ be parameterized problems, we say that there is a polynomial-parameter transformation from $P$ to $Q$, denoted $P \leq _{ppt} Q$,
 %is polynomial time and parameter reducible to $B$, 
 if there exists a polynomial time computable function $f: \{0,1\}^* \times \mathbb{N} \rightarrow \{0,1\}^* \times \mathbb{N}$ and
 polynomial $p: \mathbb{N} \rightarrow \mathbb{N}$, for all $x \in \{0,1\}^* $ and $k \in \mathbb{N}$, then following hold
 \begin{enumerate}
  \item $(x,k) \in P$ if and only if $(x',k')=f(x,k) \in Q$ 
  \item $k' \leq p(k)$
 \end{enumerate}
 The function $f$ is called a polynomial-parameter transformation from $P$ to $Q$.
\end{definition}
\begin{theorem} \cite{bodlaender2011kernel}
Let $P$ and $Q$ be parameterized problems, and suppose that $P^c$ and $Q^c$ are
the derived classical problems. Suppose that $P^c$
is $\NP$-complete, and $Q^c \in \NP$. Suppose
that $f$ is a polynomial time and parameter transformation from $P$ to $Q$. Then, if $Q$ has a
polynomial kernel, then $P$ has a polynomial kernel.
\end{theorem}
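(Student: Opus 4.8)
The plan is to compose three maps --- the given polynomial parameter transformation, the assumed polynomial kernel for $Q$, and a classical polynomial-time reduction licensed by the complexity-theoretic hypotheses --- into a single polynomial-time procedure that shrinks any instance of $P$ to one whose size is bounded by a polynomial in its parameter, which is exactly a polynomial kernel.

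First I would take an arbitrary instance $(x,k)$ of $P$ and apply the polynomial parameter transformation $f$ to obtain an equivalent instance $(x',k')=f(x,k)$ of $Q$; by the definition of a ppt this runs in polynomial time, satisfies $k'\leq p(k)$ for some polynomial $p$, and obeys $(x,k)\in P \iff (x',k')\in Q$. Next I would invoke the hypothesised polynomial kernel for $Q$ on $(x',k')$, obtaining an equivalent instance $(x'',k'')$ of $Q$ whose total size is bounded by a polynomial in $k'$, and hence by a polynomial in $k$.

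At this point I have a \emph{small} instance, but it is an instance of $Q$, whereas a kernel for $P$ must itself be an instance of $P$; the crux is therefore to transport the small instance back across the two problems. Here I would use the complexity hypotheses: since $Q^c\in\NP$ and $P^c$ is \NPC (in particular \NPH), there is a classical polynomial-time many-one reduction $g$ from $Q^c$ to $P^c$. Viewing $(x'',k'')$ as an instance of $Q^c$ and applying $g$ yields an instance $w$ of $P^c$ with $|w|$ polynomial in the size of $(x'',k'')$, hence polynomial in $k$, and with $w\in P^c \iff (x'',k'')$ a yes-instance of $Q^c$. Finally, because $P^c$ is the classical problem derived from $P$ with the parameter encoded in unary, I would parse $w$ back into a parameterized instance $(x^*,k^*)$ of $P$, for which both $|x^*|$ and $k^*$ are at most $|w|$. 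Chaining all the equivalences gives $(x,k)\in P \iff (x^*,k^*)\in P$, the whole pipeline runs in polynomial time, and $|x^*|+k^*$ is polynomial in $k$, so $(x^*,k^*)$ is the desired polynomial kernel.

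The main obstacle is the third step: the ppt together with the kernel for $Q$ only produces a polynomial \emph{compression} of $P$ into the \emph{foreign} language $Q^c$, not a genuine self-reduction of $P$. The essential point is that the hypothesis that $P^c$ is \NPC is precisely what allows this compression to be converted into a bona fide kernel, by reducing the compressed $Q^c$-instance back into $P^c$ and re-reading it as a parameterized instance of $P$. One must also take care that the reconstructed \emph{parameter} $k^*$, and not merely the bit-length $|x^*|$, is polynomially bounded; this is guaranteed by the unary encoding of the parameter in the derived classical problem, which forces $k^*\leq|w|$.
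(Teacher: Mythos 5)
Your proposal is correct: the paper itself does not prove this statement (it is quoted from the cited reference), and your argument is precisely the canonical one from that source --- compose the polynomial parameter transformation with the assumed kernel for $Q$, then use the \NP-completeness of $P^c$ (together with $Q^c \in \NP$) to reduce the compressed $Q$-instance back to an instance of $P$ of size polynomial in $k$, with the unary parameter encoding ensuring the reconstructed parameter is also polynomially bounded. You correctly identify the one non-trivial point, namely that without the back-reduction one only obtains a compression into a foreign language rather than a kernel, so nothing is missing.
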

\begin{corollary}
 Let $P$ and $Q$ be parameterized problems whose unparameterized
versions $P^c$ and $Q^c$ are $\NP$-complete. If $P \leq_{ppt} Q$ and $P$ does not have a
polynomial kernel, then $Q$ does not have a polynomial kernel.
\end{corollary}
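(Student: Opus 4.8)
The plan is to obtain the corollary directly as the contrapositive of the preceding theorem of Bodlaender et al.~\cite{bodlaender2011kernel}, once I have checked that the hypotheses of that theorem are all met. First I would unpack the assumption $P \leq_{ppt} Q$: by the definition of a polynomial-parameter transformation, this supplies a polynomial-time computable function $f$ together with a polynomial $p$ such that $(x,k) \in P$ iff $f(x,k) \in Q$ and the new parameter is bounded by $p(k)$. That is precisely a polynomial time and parameter transformation from $P$ to $Q$, i.e.\ exactly the object the theorem takes as input.

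Next I would verify the remaining two preconditions of the theorem, which demand that $P^c$ be \NP-complete and that $Q^c \in \NP$. The first is given outright in the corollary. The second follows immediately because $Q^c$ is assumed to be \NP-complete, and every \NP-complete problem lies in \NP. Hence all three hypotheses of the theorem hold verbatim, and the theorem applies to our $P$, $Q$, and $f$.

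Invoking the theorem now yields the implication: if $Q$ has a polynomial kernel, then $P$ has a polynomial kernel. Reading this contrapositively gives the statement that if $P$ admits no polynomial kernel, then neither does $Q$. Since we are given that $P$ has no polynomial kernel, we conclude that $Q$ has no polynomial kernel, which is exactly the claim.

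There is essentially no technical obstacle here, as the corollary is a logical repackaging of the theorem; the only point that warrants a moment's care is confirming that the full \NP-completeness of $Q^c$ assumed in the corollary is at least as strong as the weaker membership $Q^c \in \NP$ that the theorem actually requires, so that the theorem applies without modification. The stronger assumption is what makes the corollary a clean, symmetric tool for ruling out polynomial kernels by ppt-reductions, even though only $Q^c \in \NP$ is strictly needed for this direction.
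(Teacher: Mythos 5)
Your proof is correct and matches the paper's intent exactly: the paper states this corollary without proof, as it is an immediate contrapositive of the preceding theorem of Bodlaender et al., which is precisely the argument you give (including the observation that \NP-completeness of $Q^c$ supplies the required membership $Q^c \in \NP$). No gaps.
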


\problembox{\cl{} [\vc{}] }{A graph $G$, a vertex cover $X \subseteq V(G)$ and an integer $k$.}{The size $l := |X|$ of the vertex cover.}{Does there exists a clique of  size $k$ in $G$?}

\begin{theorem} \cite{bodlaender2014kernelization}
 \textnormal {\cl{} [\vc{}]} does not admit a polynomial kernel unless $\NP \subseteq co\NP/poly$.
\end{theorem}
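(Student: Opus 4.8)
The plan is to rule out a polynomial kernel through a composition argument, since this is the standard (and, for the cited result of Bodlaender, Jansen, and Kratsch, the actual) route to such lower bounds. Recall the relevant meta-theorem: if an \NP-hard language $L$ admits an \emph{OR-cross-composition} into a parameterized problem $Q$ whose classical version lies in \NP, then $Q$ has no polynomial kernel unless $\NP \subseteq \coNP/\mathrm{poly}$. This is the counterpart of the polynomial-parameter-transformation machinery already set up above; indeed, one could equivalently derive the statement by a polynomial-parameter transformation from any problem known to lack a polynomial kernel, invoking the corollary recorded earlier. Concretely, I would cross-compose from \cl{}: given $t$ instances $(G_1,k),\dots,(G_t,k)$, I must build a single graph $H$, supplied with an explicit vertex cover, and a threshold $k'$ such that (i) the size of that cover is bounded by a polynomial in $n + \log t$, where $n$ is the common number of vertices, and (ii) $H$ has a clique of size $k'$ if and only if some $G_i$ has a clique of size $k$. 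By the polynomial equivalence relation defining cross-composition I may assume all $G_i$ share the vertex set $[n]$ and the same demand $k$; moreover, discarding duplicate graphs lets me assume $t \le 2^{\binom{n}{2}}$, so that $\log t = O(n^2)$ and any bound polynomial in $n$ is admissible for the parameter.

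The construction of $H$ would have two parts. First, an \emph{instance-selection} gadget on $O(\log t)$ vertices: for each of the $s=\lceil \log t\rceil$ bit positions I add two vertices for the bit values $0$ and $1$, make the two vertices of one position non-adjacent and any two vertices of distinct positions adjacent, so that a clique of size $s$ inside this gadget encodes exactly one index $i\in[t]$. Second, a \emph{verification} part that, conditioned on the selected index, admits a clique of size $k$ precisely when the chosen labels induce a clique of $G_i$. To keep the cover small I would push the bulk of the instance data into a large but ``cheap'' independent set (whose size may be polynomial in $t$ and $n$, which is allowed), letting the selector gate these vertices so that only the data of the selected instance is usable. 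Setting $k'=k+s+c$ for the appropriate gadget constant $c$, the forward direction is routine: a $k$-clique of $G_i$, together with the encoding of $i$ and the fixed gadget vertices, forms a clique of size $k'$. For the reverse direction I would argue that any clique of size $k'$ is forced to occupy the whole selector, hence to commit to a single index $i$, and then to contain $k$ label vertices that are pairwise compatible only when they induce a clique in that very $G_i$.

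For the two conditions of cross-composition, membership is immediate (\cl{} is in \NP, and since a graph with a vertex cover of size $p$ has no clique larger than $p+1$, \cl{}[\vc{}] is in fact \fpt{}, so the polynomial-kernel question is the meaningful one). The parameter side reduces to bounding the cover of $H$: it consists of the $O(\log t)$ selector vertices together with the $\mathrm{poly}(n)$ structural vertices of the verification part, the raw instance data residing in the independent set; since $\log t=O(n^2)$ this is $\mathrm{poly}(n)$, as required. The main obstacle is the gating in the verification part. A clique is rigid---all its vertices must be pairwise adjacent---so encoding $t$ \emph{different} edge relations in one fixed graph, while both forcing every edge check to refer to the single selected instance and preventing the selector from ``inventing'' an edge set not among the inputs, is exactly where care is needed: a naive label-per-vertex or label-per-pair encoding collapses the intended OR over instances into an unwanted AND, or into a trivially satisfiable instance. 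Resolving this is the heart of the argument; the per-instance relation must be stored in the independent set and tied, through a consistency check, to the unique index fixed by the selector, so that exactly one instance is ``active'' in any maximum clique.
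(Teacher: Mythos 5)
First, a point of order: the paper does not prove this statement at all --- it is imported verbatim from Bodlaender, Jansen, and Kratsch and used only as the launching point for the polynomial-parameter transformation in Theorem~\ref{th-nopoly}. So the only meaningful comparison is against the cited proof, which is indeed an OR-cross-composition of \cl{} into \cl{}~[\vc{}]. Your framework is the right one: the meta-theorem you invoke is correct, the deduplication bound $t \le 2^{\binom{n}{2}}$ and the padding to a common $(n,k)$ via a polynomial equivalence relation are standard, and your observation that \cl{}~[\vc{}] is \fpt{} (a clique exceeds a vertex cover of size $p$ by at most one vertex) correctly explains why the kernelization question is the interesting one.

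However, your write-up stops exactly at the step you yourself flag as ``the heart of the argument,'' and the architecture you commit to there would fail. You propose an $O(\log t)$-bit selector clique plus per-instance edge relations ``stored in the independent set and tied, through a consistency check, to the unique index fixed by the selector.'' But since everything outside the small vertex cover of $H$ must be an independent set, \emph{any clique of $H$ contains at most one vertex of that set}: the edge relation of the selected $G_i$ can therefore never be read off from several independent-set vertices, and a single such vertex adjacent to $n$ label vertices cannot enforce $\binom{k}{2}$ pairwise adjacency checks --- there is no mechanism for your consistency check, selector or no selector. The cited proof resolves this by inverting your division of labour: the verification structure lives \emph{inside} the cover, while each instance contributes exactly one cheap vertex. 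Concretely (in essence the construction of the cited paper): take slot vertices $v_{i,p}$ for $i \in [k]$, $p \in [n]$, adjacent iff $i \neq i'$ and $p \neq p'$; pair-checking vertices $e_{(i,j),(p,q)}$ for each slot pair $i<j$ and each ordered vertex pair $p \neq q$, adjacent to the matching slot vertices, to slot vertices of other slots with other labels, and to mutually compatible checkers; and for each instance $G_r$ a single vertex $u_r$ adjacent to all slot vertices and to exactly those checkers with $\{p,q\} \in E(G_r)$. The $u_r$ form the independent set, the cover has size $O(k^2 n^2) = \mathrm{poly}(n)$, and the target $k' = 1 + k + \binom{k}{2}$ forces any $k'$-clique to use exactly one $u_r$ (one slot vertex per slot and one checker per slot pair give at most $k'-1$ otherwise), so instance selection comes for free from the independent set and your bit-selector gadget is unnecessary. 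As written, then, your proposal is an accurate plan for the known proof, but the missing gadget is the entire technical content, and the specific mechanism you sketch for it does not work.
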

}
\begin{figure}[h]
\centering
\includegraphics[scale=0.6,trim={5cm 20cm 2cm 3.1cm},clip]{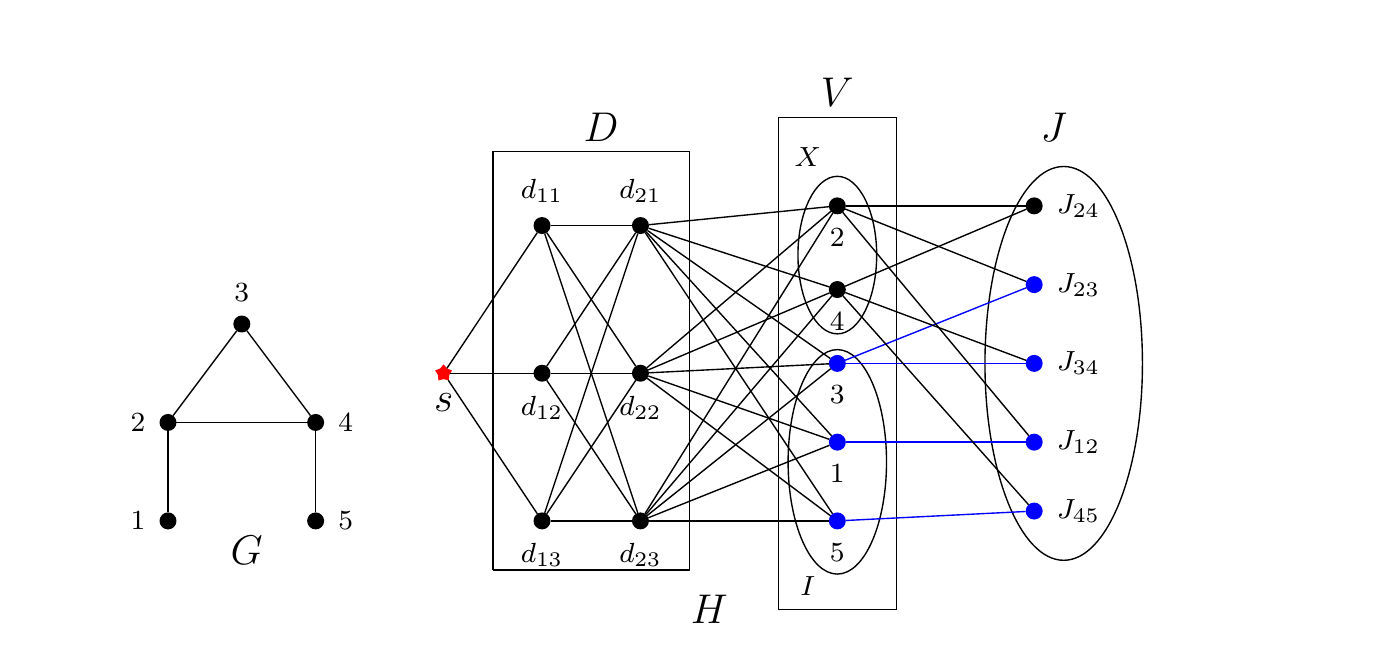}
\caption{ \hspace{0.1cm} An instance of \cl{} [\vc{}] and the constructed graph $H$ represents \skv{} [Distance to stars] in the proof of Theorem~\ref{th-nopoly} for $l=2$ and $k = 3$. In this example $X=\{2,4\}$ is 
a vertex cover of $G$ and $H \setminus (\{s\} \cup D \cup X)$ is disjoint union of stars (graph induced by blue vertices).}
\label{fig3}
\end{figure}
\begin{theorem}\label{th-nopoly}
 \textnormal \skv{} parameterized by the distance to  disjoint union of stars does not admit a polynomial kernel, unless $NP \subseteq co\NP/poly$.
\end{theorem}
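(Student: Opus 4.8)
The plan is to give a polynomial parameter transformation from \cl{}[\vc{}] to \skv{}[Distance to stars], which suffices by the corollary stated above, since \cl{}[\vc{}] has no polynomial kernel unless $\NP \subseteq co\NP/poly$. Given an instance $(G, X, k)$ of \cl{}[\vc{}] with $X$ a vertex cover of size $l$, I will build a firefighting instance $(H, s, k')$ whose modulator to a disjoint union of stars has size polynomial in $l$ (in fact $O(l)$), and arrange that saving the target number of vertices forces the firefighter to ``select'' a $k$-clique of $G$ inside the modulator. The key design idea, mirroring the diameter-two reduction of Theorem~\ref{th-diameter-two}, is a gadget $D$ consisting of a grid of ``dummy'' layers $d_{ij}$ hanging off the source $s$: the fire marches down these layers, and in each of the first several time steps the firefighter is forced to spend its move defending a vertex that encodes a choice of a clique vertex, because at each layer there are several indistinguishable burning vertices with identical neighborhoods so that defending a dummy is useless.

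The heart of the construction is to place the vertices of $X$ (the vertex cover) into $H$ as the only vertices that carry the adjacency information of $G$, and to attach, to each original vertex of $G$, a small star (a center together with leaves) so that everything outside $\{s\} \cup D \cup X$ is a disjoint union of stars. Concretely, the edges of $G$ get encoded by star-gadgets incident to the cover vertices, and the number of leaves on each star is tuned so that: saving a star is only possible if its associated vertex is ``chosen'' into the clique, and the only way to save at least $k'$ vertices is to defend exactly a set of cover vertices (or their representatives) that induce a clique of size $k$ in $G$. I would set $k'$ to be the total count of vertices saved by a clique-certifying strategy, paralleling the value $k + \binom{k}{2} + 2$ used in Theorem~\ref{th-diameter-two}: the $\binom{k}{2}$ term counts the edge-gadget vertices that become saved precisely when all endpoints of those edges are defended, which happens exactly when the chosen vertices form a clique.

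The correctness argument then splits into the two standard directions. For the forward direction, given a $k$-clique $K$ in $G$, I exhibit the explicit valid strategy: spend the first few rounds defending the representatives of the vertices of $K$ inside $X$ (these are reachable exactly at the right time steps because of the layered gadget $D$), and check by direct counting that the number of saved vertices meets $k'$. For the reverse direction, I argue that any strategy saving at least $k'$ vertices cannot profitably defend dummy vertices in $D$ (each has a burning twin with the same neighborhood, as in Theorem~\ref{th-diameter-two}), nor can it waste moves on star leaves or edge-gadgets, so all of its useful moves must land on the $X$-vertices; a counting bound on the length of the defending sequence (at most one vertex per time step and a bounded number of time steps) then forces the defended $X$-vertices to number $k$ and, via the edge-gadget accounting, to induce a clique. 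Finally I verify that $H \setminus (\{s\} \cup D \cup X)$ is a disjoint union of stars and that the modulator $\{s\} \cup D \cup X$ has size polynomial in $l$, establishing $k' \le p(l)$.

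The main obstacle I anticipate is controlling the timing in the layered gadget $D$ simultaneously with the modulator-size bound: the fire must reach the relevant cover-vertex representatives at exactly the steps when the firefighter is free to defend them, yet $D$ itself lies inside the modulator and so its total size must stay polynomial in $l$ rather than in $k$. Balancing these — ensuring the forced structure of the reduction in Theorem~\ref{th-diameter-two} while keeping the deletion distance to stars bounded by a polynomial in the vertex-cover parameter $l$ — is the delicate part, and I expect it to require that the layer count be tied to $l$ (not to the clique size $k$, which may be much larger) together with a careful argument that no shortcut strategy beats the intended one.
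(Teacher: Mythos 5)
Your high-level plan --- a polynomial-parameter transformation from \cl{}~[\vc{}], a layered dummy gadget $D$ whose twin vertices make defending dummies useless, edge gadgets counted by a $\binom{k}{2}$ term, and a target of roughly $k+\binom{k}{2}+1$ saved vertices --- is exactly the paper's approach. But your proposal contains a genuine error precisely where you locate the ``delicate part.'' Your anticipated obstacle rests on a misconception: the clique size $k$ can never be ``much larger'' than the vertex cover size $l$, because at most one vertex of any clique lies in the independent set $I = V(G) \setminus X$, so $k \le l+1$ always. The paper exploits exactly this: it keeps $k-1$ layers of $k$ dummies each, and the modulator $\{s\} \cup D \cup X$ has size $k(k-1)+l+1 \le (l+1)^2$, which is all that is needed. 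Your proposed fix --- tying the layer count to $l$ instead of $k$ --- would actually break the reduction: with about $l$ layers the firefighter gets roughly $l \ge k$ useful moves before the fire reaches the graph part, so a strategy may defend up to $l$ arbitrary vertices $S_V$ of $G$ and save about $|S_V| + |E(G[S_V])| + 1$ vertices. The reverse direction then only extracts an $l$-vertex set inducing at least $\binom{k}{2}$ edges, which does not imply a $k$-clique (a long path has no triangle, yet an $l$-subset of it induces $l-1 \ge \binom{3}{2}$ edges). No ``careful argument about shortcut strategies'' can repair this; the timing of $D$ must give the firefighter exactly $k$ moves before the fire reaches $V(G)$, and the parameter bound is then recovered from $k \le l+1$.

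The second gap concerns the vertices of $I$. Your forward direction defends ``representatives of the vertices of $K$ inside $X$'' and your reverse direction claims all useful moves ``must land on the $X$-vertices.'' But a $k$-clique of $G$ may contain one vertex of $I$, which has no representative in $X$; if only cover vertices are defendable/savable in your construction, such yes-instances of \cl{} map to no-instances of \skv{}, breaking the forward direction. The paper handles this by keeping all of $V(G)$ in $H$: the $I$-vertices become star centers outside the modulator (every edge $uv$ of $G$ is represented by a vertex $J_{uv}$, so $H \setminus (\{s\}\cup D\cup X)$ is a disjoint union of stars), and both cover and non-cover vertices can be defended. This makes the reverse direction genuinely harder, and the paper needs a dedicated counting step: if a strategy defends $d \ge 2$ vertices of $I$, it saves at most $k+1+\binom{k-d}{2}+d(k-d) < k+\binom{k}{2}+1$ vertices, forcing $|S \cap I| \le 1$, after which the cases $|S \cap I| = 0$ and $|S \cap I| = 1$ each force the defended vertices of $V(G)$ to induce a $k$-clique. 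Your sketch has no counterpart of this argument, and it cannot be omitted.
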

\begin{proof}
 We show that there is
a polynomial-parameter transformation from \cl{} [\vc{}] to \skv{} [Distance to stars].
Let $(G, X, l, k)$ be an instance of \cl{} [\vc{}]. We construct a graph $H$ (see Fig.~\ref{fig3}) from $G$ as follows.
Initialize $H$ as a copy of $G$, this set of vertices are denoted by $V=X \cup I$
%and add edges between pair of vertices in $H$.
%For each vertex $v \in V(G)$ add a vertex $v'$ in $H$: this set of vertices is denoted by $C$. Add an edge between every pair of vertices in $C$. 
For each edge $uv \in E(G)$ add a vertex $J_{uv}$ to $H$ and make it adjacent to both vertices $u$ and $v$. we denote this set of 
vertices with $J$.

Add a root vertex $s$, and add vertices $d_{ij}$ (set of vertices denoted by $D$) for $1\leq i \leq k-1$ and $1\leq j \leq k$.
 Add edges between $d_{ij}$ to $d_{i'j'}$ for all $i,j,j'$ and $i'=i+1$. Connect $s$ to $d_{1,j}$ for all $j$ and connect $d_{k-1,j}$ to each 
 vertex of $G$ in $H$ for all $j$. Vertex set of $H$ is $V(H)=\{s\}\cup D \cup V\cup J$. 
 %It is easy to see that $H \setminus D \cup \{s\}$ is a split graph.  
 It is easy to see that $H \setminus(\{s\}\cup D \cup X)$ is a disjoint union of stars. 
 Therefore $H$ is $k(k-1)+l+1$ distance to stars with $X'=\{s\} \cup D \cup X$ as a modulator. 
 Since $k \leq l+1$, $H$ is at most $(l+1)^2$ distance to stars.
 Let $l':= |X'|$ is polynomially bounded in the input parameter $l$ and we can construct the graph $H$ in polynomial time. 
 Set $k'=k+{k \choose 2}+1$ for saving vertices in $H$. Let $(H, X',l',k')$ is an instance of
the \skv{} problem parameterized by a distance to stars.

We show that \skv{} on $(H, X',l',k')$ is a yes instance if and only if there is a $k$-clique on $(G, X, l, k)$ is a
yes instance. 
Suppose that $(G,X,l,k)$ is a yes-instance of \cl{} [\vc{}] and $C$ be a $k$-clique in $G$.
%Suppose there is a $k$-clique $C$ in $G$ then 
We can find a valid strategy $S$ which saves at least $k'$ vertices in $H$ as follows:
For first $k$ time steps $S$ defends $k$ vertices of $C$ in $H$.  At time step $k+1$, $S$ defends some vertex 
$J_{uv}$ in $H$ such that $uv \not \in E(C)$. 
The strategy $S$ saves $k$ vertices in $C$ and saves ${k \choose 2}$ vertices in $H$ corresponding to clique edges in $G$ and the 
vertex $J_{uv}$.

%\end{proof}

Conversely, suppose that 
$S$ is a valid strategy that saves at least $k'$ vertices in $H$.
It is easy to see that at any time step, defending a vertex $d_{ij}$ for any $i,j$ is not helpful, 
because there is at least one burning vertex $d_{i,j'}$ with same neighborhood as $d_{ij}$.
Further protecting $J_{uv}$ vertices for some $uv \in E(G)$ does not save any vertices.
 %Hence we can see that no vertex form $D \cup I$ is protected by the strategy $S$.
So the vertices in $S \cap V$ are responsible for saving at least ${k \choose 2}$ many vertices.

If $|S \cap I|=d \geq 2$ then in the best case $S$ can save at most $k''=k+1+{k-d \choose 2} + d(k-d) =k+1+(k-d)(k+d-1)/2$ vertices in $H$.
%$E(X)$ plus at most $d(l-d)$ vertices in $E[X,I]$ 
Observe that $k''$ is less than $k'$ for $d \geq 2$, contradiction to fact that the strategy $S$ saves at least $k'$ vertices 
in $H$. Therefore  $|S \cap I|\leq 1$.
If $S \cap I = \emptyset$ then the strategy $S$ saves at least $k'$ vertices if and only if the vertices defended in $S \cap X$ are
forms a $k$-clique in $G$.
If $|S \cap I| = 1$ then $S$ defends one vertex $v$ in $I$ and $k-1$ vertices in $X$. 
$S$ saves at least $k'$ vertices if
$k-1$ vertices defended in $X$ induces a $(k-1)$-clique in $G$ and $v$ is adjacent to $k-1$ defended vertices.
Combining both cases,  $S$ can save at least $k'$ vertices in $H$ if the vertices defended in $S \cap V$ forms a $k$-clique in $G$.\qed

\end{proof}

\longversion{\section{Conclusion}
In this paper, we studied the parameterized complexity of \ff{} problem for various structural parameters.
We considered the size of a modulator to threshold graphs, cluster graphs and disjoint unions of stars as parameters, and showed \fpt{} algorithms in all cases. \longversion{We also established that \ff{} admits a polynomial kernel when parameterized by the size of a modulator to a clique, while it is unlikely to admit a polynomial kernel when
parameterized by the size of a modulator to a disjoint union of stars.}\shortversion{We also explored the kernelization complexity of the problem.} In contrast to the tractability results, we found that \ff{} is \WOH{} when parameterized by the distance to split graphs.

The following problems remain open. Does \skv{} admit a polynomial kernel when parameterized by $k$ and the size of a vertex cover? Also, is the \ff{} problem \fpt{} when parameterized by distance to cographs?
}

\bibliographystyle{splncs03}
\bibliography{myrefs.bib}

\begin{thebibliography}{10}
\providecommand{\url}[1]{\texttt{#1}}
\providecommand{\urlprefix}{URL }

\bibitem{bazgan2014parameterized}
Bazgan, C., Chopin, M., Cygan, M., Fellows, M.R., Fomin, F.V., van Leeuwen,
  E.J.: Parameterized complexity of firefighting. Journal of Computer and
  System Sciences  80(7),  1285--1297 (2014)

\bibitem{bodlaender2014kernelization}
Bodlaender, H.L., Jansen, B.M., Kratsch, S.: Kernelization lower bounds by
  cross-composition. SIAM Journal on Discrete Mathematics  28(1),  277--305
  (2014)

\bibitem{bodlaender2011kernel}
Bodlaender, H.L., Thomass{\'e}, S., Yeo, A.: Kernel bounds for disjoint cycles
  and disjoint paths. Theoretical Computer Science  412(35),  4570--4578 (2011)

\bibitem{cai1996fixed}
Cai, L.: Fixed-parameter tractability of graph modification problems for
  hereditary properties. Information Processing Letters  58(4),  171--176
  (1996)

\bibitem{cai2008firefighting}
Cai, L., Verbin, E., Yang, L.: Firefighting on trees:(1- 1/e)--approximation,
  fixed parameter tractability and a subexponential algorithm. In:
  International Symposium on Algorithms and Computation. pp. 258--269. Springer
  (2008)

\bibitem{chalermsook2010resource}
Chalermsook, P., Chuzhoy, J.: Resource minimization for fire containment. In:
  Proceedings of the twenty-first annual ACM-SIAM symposium on Discrete
  Algorithms. pp. 1334--1349. Society for Industrial and Applied Mathematics
  (2010)

\bibitem{chlebikova2017firefighter}
Chleb{\'\i}kov{\'a}, J., Chopin, M.: The firefighter problem: Further steps in
  understanding its complexity. Theoretical Computer Science  676,  42--51
  (2017)

\bibitem{cygan2015parameterized}
Cygan, M., Fomin, F.V., Kowalik, {\L}., Lokshtanov, D., Marx, D., Pilipczuk,
  M., Pilipczuk, M., Saurabh, S.: Parameterized algorithms, vol.~4. Springer
  (2015)

\bibitem{downey1999parameterized}
Downey, R.G., Fellows, M.R.: Parameterized complexity, vol.~3. springer
  Heidelberg (1999)

\bibitem{finbow2000minimizing}
Finbow, S., Hartnell, B., Li, Q., Schmeisser, K.: On minimizing the effects of
  fire or a virus on a network. Journal of Combinatorial Mathematics and
  Combinatorial Computing  33,  311--322 (2000)

\bibitem{finbow2007firefighter}
Finbow, S., King, A., MacGillivray, G., Rizzi, R.: The firefighter problem for
  graphs of maximum degree three. Discrete Mathematics  307(16),  2094--2105
  (2007)

\bibitem{Finbow:2009vma}
Finbow, S., MacGillivray, G.: {The firefighter problem: a survey of results,
  directions and questions}. The Australasian Journal of Combinatorics  43,
  57--77 (2009)

\bibitem{foldes1976split}
Foldes, S., Hammer, P.L.: Split graphs. Universit{\"a}t Bonn. Institut f{\"u}r
  {\"O}konometrie und Operations Research (1976)

\bibitem{fomin2016firefighter}
Fomin, F.V., Heggernes, P., van Leeuwen, E.J.: The firefighter problem on graph
  classes. Theoretical Computer Science  613,  38--50 (2016)

\bibitem{ganian2015improving}
Ganian, R.: Improving vertex cover as a graph parameter. Discrete Mathematics
  \& Theoretical Computer Science  17(2),  77--100 (2015)

\bibitem{hartnell1995firefighter}
Hartnell, B.: Firefighter! an application of domination. In: 25th Manitoba
  Conference on Combinatorial Mathematics and Computing (1995)

\bibitem{king2010firefighter}
King, A., MacGillivray, G.: The firefighter problem for cubic graphs. Discrete
  Mathematics  310(3),  614--621 (2010)

\bibitem{leung2011fixed}
Leung, M.L.: Fixed parameter tractable algorithm for firefighting problem.
  arXiv preprint arXiv:1104.1044  (2011)

\bibitem{macgillivray2003firefighter}
MacGillivray, G., Wang, P.: On the firefighter problem. Journal of
  Combinatorial Mathematics and Combinatorial Computing  47,  83--96 (2003)

\end{thebibliography}
\newpage
\clearpage
\appendix
%\pagenumbering{arabic}% resets `page` counter to 1
%\renewcommand*{\thepage}{A\arabic{page}}

\end{document}